\newtheorem {definition}{Definition}
\newtheorem{Remark}{Remark}
\newtheorem{Proposition}{Proposition}
\begin{document}
    \title{Multi-Objective Energy Efficient Resource Allocation and User Association for In-band Full Duplex Small-Cells}
\author{
		\IEEEauthorblockN{Sheyda Zarandi, Ata Khalili, \textit{Graduate Student Member, IEEE},~Mehdi Rasti, \textit{Member, IEEE},~and Hina Tabassum,~\textit{Senior Member, IEEE}} 
\thanks{This work was supported in part by the  Discovery
	Grant from the Natural Sciences and Engineering Research Council of Canada. Sheyda Zarandi was at the Department of Computer Engineering, Amirkabir University of Technology, Tehran, Iran and now is with the Department of Electrical Engineering and Computer Science at  York  University, Canada.~Ata Khalili was a Visiting Researcher at the Department of Computer Engineering, Amirkabir University of Technology, Tehran, Iran and now is Research Assistant at Electronics Research Institute, Sharif University of Technology,
Tehran, Iran. Mehdi Rasti is with the Department of Computer Engineering, Amirkabir University of Technology, Tehran, Iran. Hina Tabassum is with the Department of Electrical Engineering and Computer Science at  York  University, Canada.(e-mails:~shz@york.ca,~ata.khalili@ieee.org, rasti@aut.ac.ir, hina@eecs.yorku.ca).~This paper has been presented in part at the IEEE Globecom 2019~\cite{GL}.}.}
			
\maketitle

	\begin{abstract}
	In this paper, we develop a framework to maximize the network energy efficiency (EE) by optimizing joint user-base station~(BS) association,~subchannel assignment, and power control considering an in-band full-duplex (IBFD)-enabled small-cell network. We maximize EE (ratio of network aggregate throughput and power consumption) while guaranteeing a minimum data rate requirement in both the uplink and downlink. The considered problem belongs to the category of mixed-integer non-linear programming problem (MINLP), {\color{black} and thus is NP-hard}. To cope up with this complexity and to derive a trade-off between system throughput and energy utilization, we first restate the considered problem as a multi-objective optimization problem (MOOP) aiming at maximizing system's throughput and minimizing system's energy consumption, simultaneously. This MOOP is then tackled  by using $\epsilon$-constraint method. To do so, we first transform the binary subchannel  and BS assignment variables into continuous ones without altering the feasible region of the problem and then approximate the non-convex rate functions through majorization-minimization (MM) approach. Simulation results are presented to demonstrate the effectiveness of our proposed algorithm in improving network's EE compared to the existing literature.~Furthermore, simulation results unveil that by employing the IBFD capability in OFDMA networks, our proposed resource allocation algorithm achieves a $69\%$ improvement in the EE as compared to the half-duplex system for practical values of residual self-interference.
	\end{abstract}
	\begin{IEEEkeywords}
		In-band full-duplex (IBFD), energy efficiency (EE), mixed-integer non-linear programming (MINLP),~ multi-objective optimization~(MOOP), resource allocation, majorization-minimization (MM).
	\end{IEEEkeywords}
	\vspace{-2mm}
	\section{Introduction}
In-band full-duplex (IBFD) communication is a well-known  technique to enhance the spectral and energy efficiency of  emerging fifth generation (5G) wireless networks, by allowing  a user to send and receive data simultaneously in the same time and frequency (albeit at the cost of additional self-interference (SI))\cite{GL},\cite{111}. Energy efficient radio resource management algorithms generally aim to maximize the system throughput and minimize the corresponding energy consumption, without differentiating between the priority of these competing objectives. However, under certain circumstances,  handling the precedence of multiple objectives over each other becomes crucial. For example,  it is  more beneficial for network devices enabled with renewable energy to exploit network's resources for improving their quality-of-service (QoS) rather than focusing on minimizing their energy utilization.{\color{black}~Accordingly,~finding a trade-off between spectral efficiency (SE) and energy efficiency (EE) in IBFD communication through multi-objective optimization is crucial, specially since IBFD is in fact a promising technology to improve both of these metrics in UL as well as DL of cellular wireless networks \cite{DR.Hina,Overview}.}
	
	\subsection{Related Works}
    Recently, a plethora of research works considered IBFD communication to improve systems' performance~\cite{8,9,10,11,13,NOMA_FD,2,6,moghayese}. For instance, the problem of system throughput maximization in IBFD networks was investigated in \cite{2,8,9,10,11,13}. In \cite{8,9,10}, joint subchannel and power allocation  with one full-duplex base station (BS)  and multiple half-duplex (HD) user equipment was considered. In \cite{8}, the aforementioned problem was addressed when full channel state information (CSI) is known as well as when BS obtains limited CSI through channel feedbacks. In \cite{9}, an iterative algorithm for joint subchannel and power allocation was proposed, which deals with the power allocation after obtaining subchannel allocation through  variable relaxation. To avoid the time-complexity of the iterative algorithms, authors of \cite{10} employed decomposition method and dealt with power control  after obtaining subchannel allocation policy by using a heuristic approach. Similarly, in \cite{11} and \cite{13}, system throughput maximization was investigated for two-tier heterogeneous networks. In \cite{11}, the problem of subchannel assignment, power control, and duplexing mode selection was addressed using heuristic algorithms, while in \cite{13}, only the problem of power allocation was investigated considering both SI and cross-tier interference.~The authors in \cite{NOMA_FD} considered throughput maximization via joint subchannel assignment and power control in non orthogonal multiple access-FD system and obtained both locally optimal and suboptimal solutions.


	In \cite{2}, the problem of decoupled uplink (UL)-downlink (DL) user association in a two-tier full duplex cellular network was considered and a geometric programming approach was proposed to maximize network throughput.~Furthermore,~a distributed many-to-one matching game based solution was proposed.~However,~the subchannel allocation was completely overlooked.~In \cite{6}, two distributed power allocation algorithms, one for minimizing network aggregate power consumption and one for maximizing network's throughput, were proposed.~However, QoS was guaranteed in terms of users' minimum signal to interference plus noise ratio (SINR).
	In \cite{moghayese}, subchannel and power allocations were optimized to maximize EE of a single cell IBFD network, while considering SI. A heuristic approach for obtaining subchannel allocation was proposed and then the power allocation problem was addressed using Augmented Lagrangian method.
	
	None of the aforementioned resource management algorithms for IBFD networks considered multi-objective optimization. Recently, IBFD communications was considered in a single cell  simultaneous wireless information and power transfer (SWIPT) network with IBFD BS and HD users in \cite{MO2}. The goal of the modeled multi-objective optimization problem (MOOP) was to derive a trade-off between minimizing UL and DL transmit powers as well as maximizing harvested energy. This problem was then optimally solved using semi-definite program relaxation. In \cite{MO3}, a framework for deriving a trade-off between EE and SE was proposed for a system with  IBFD small base stations (SBS) and HD users. The length of the time-slot allocated to each user-pair and users' UL and DL data rates were considered as variables and it is guaranteed that users' assigned time-slot is longer than a minimum value. Furthermore, in \cite{GL}, the problem of joint subchannel assignment and power control is studied to strike a balance between EE and SE in a single cell IBFD network.
	\subsection{Contributions}
	To the best of our knowledge, the problem of joint BS,~subchannel assignment and power allocation for EE maximization in a  small-cell IBFD network under QoS and power feasibility constraints has not been investigated so far. Most of the aforementioned literature focus either on the throughput maximization  \cite{2,8,9,10,11,13,NOMA_FD}  with no  energy efficiency considerations or minimization of system's aggregate power consumption \cite{6, MO2} with no considerations to achievable throughput. Furthermore, except for \cite{11} and \cite{13}, all  aforementioned works consider single cell networks and, subsequently, the developed algorithms may not be directly applicable to large-scale networks with co-channel interference. Moreover, in \cite{8,9,10,11}, not only users' QoS requirements is completely ignored, but also  subchannel and power allocation problems are addressed separately, which results in performance degradation. 

	Considering the above literature review, our contributions can be summarized as follows:
	{\color{black}
	    \begin{itemize}
		\item To the best of our knowledge, joint optimization of user association (or BS assignment), power and subchannel assignment in the presence of IBFD communications has not been considered in the previous literature. In [5]-[10] and [12], only subchannel and power allocations are considered and are based on the alternating optimization where the variables are decomposed and optimized separately in an iterative manner. Since transmit power and subchannel allocation are closely intertwined variables, joint optimization of these resources can considerably enhance the performance of the system. In contrast to [5]-[10], and [12], in this paper, we  exploit the  benefits of joint resource allocation and investigate a joint optimization scheme for BS and subchannel assignment and power allocation for EE maximization.
        	
		\item BS assignment in IBFD communications requires considering both the UL and DL channel conditions, the severe SI, and inter-cell interference, which are all tightly dependant on the other optimization variables (transmit power and subchannel allocation variables). Due to these complexities, the problem of BS assignment for EE maximization in IBFD networks has not been properly addressed in previous literature. For instance in [11], which is one of the very few papers that focus on user association in IBFD networks, not only the objective function is maximization of system throughput, but also the UL and DL connections are completely decoupled from one another. Since the major benefit of IBFD communications is the capability of considering both UL and DL simultaneously, we consider coupled UL and DL IBFD communication without using any alternating optimization techniques.
		
		\item  To solve the EE maximization problem, we propose a multi-objective optimization (MOOP) framework. By doing so, we exploit numerous benefits of multi-objective optimization. Also,  we obtain a Pareto front that we will prove to contain the locally optimal solution of the original non-convex problem. We present the computational complexity  analysis  of  the  proposed  algorithm and demonstrate that this complexity is much lower for our algorithm compared to the conventional algorithms.
		
		\item Simulation results illustrate that our proposed algorithm outperforms the state-of-the-art algorithms in terms of EE. Also, by employing the IBFD in OFDMA networks with efficient SI cancellation, our proposed  algorithm achieves a $69\%$ improvement in the EE as compared to the half-duplex system for practical values of residual SI.
	\end{itemize}
		   }
	The rest of this paper is organized as follows. The system model is introduced in Section~II. In Section~III, problem formulation and the proposed solution is presented. The time complexity and performance of the proposed algorithm are evaluated in Section~IV and Section~V, respectively. Finally, Section~VI concludes this paper.
	
		\begin{figure*}
		\centering
	\includegraphics[width=18cm,height=8cm]{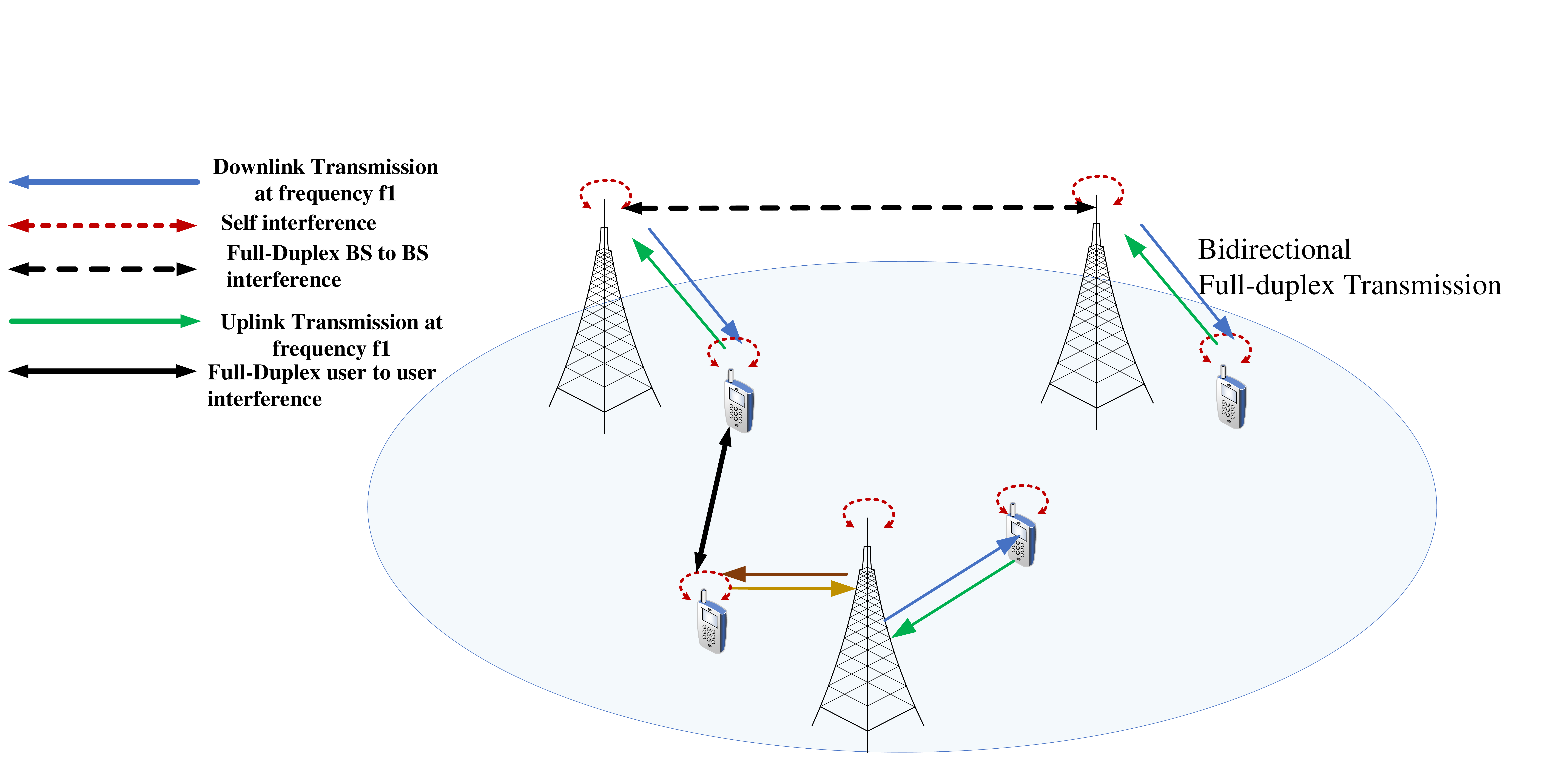}
		\caption{\textcolor{black}{The considered network model of small-cells enabled with IBFD transmission.}}
	\end{figure*}
	
	\section{System Model and Assumptions}
In this paper we consider OFDMA small-cell network with $B$ SBSs and $N$ users, which are all capable of performing IBFD communications.~We assume that the entire frequency band is partitioned into $K$ subchannels each with bandwidth $\omega$. Furthermore, the set of SBSs, users, and subchannels are denoted by $\mathcal{B}=\{1,2,...,B\}$,~ $\mathcal{N}=\{1,2,...,N\}$,~ and $\mathcal{K}=\{1,2,...,K\}$, respectively. It is also assumed that all the subchannels are perfectly orthogonal to one another and no inter-subchannel interference exists. We also consider that a subchannel is exclusively assigned for the communications of a single user in both UL and DL in each cell. The joint BS and ~subchannel assignment variable is denoted by $a_{n,b,k}$ where
	\vspace{-1mm}
	\begin{equation}
	a_{n,b,k} = \begin{cases}
	1, & \text{If BS b allocates subchannel $k$ to user \textit{n},} \\
	0, & \text{otherwise}.
	\end{cases}
	\end{equation}
	
	Even though this exclusive subchannel allocation assumption may degrade the overall network's performance, due to the inter-user interference from an UL user over the user that uses the same subchannel for DL communications, it is highly unlikely for an optimal solution to assign a subchannel to two different nodes \cite{exlusive}. We further assume that $p_{n,b,k}$ and $q_{n,b,k}$ are the DL and UL transmit powers of user $n$ to SBS $b$, over subchannel $k$, respectively, and that $\mathbf{p}\in \mathbb{R}^{1 \times NBK}$ and $\mathbf{q}\in \mathbb{R}^{1 \times NBK}$ are vectors containing DL and UL transmit powers, in that order. Note that, $h_{s,s',k}$ denotes the channel coefficient between sender $s$ and receiver $s'$ over subchannel $k$. As such, the DL SINR of user $n$ in cell $b$ over subchannel $k$ is formulated as
	\begin{equation}
	\gamma_{n,b,k}^\mathrm{dl}(\mathbf{p},~\mathbf{q}) = \frac{p_{n,b,k} h_{b,n,k}} {\underbrace{{q_{n,b,k} \Delta_\mathrm{u}}}_{\text{residual SI in DL}}+ \underbrace{I_{n,b,k}^\mathrm{dl}}_{\text{inter-cell interference in DL}}+ \sigma^2},
	\end{equation}
	where $\sigma^2$ and $\Delta_\mathrm{u}$ denote the noise density and SI-cancellation factor of user devices, respectively. Since subchannel $k$ is used for communications of user $n$ in both directions, UL and DL signals will interfere with one another, which results in SI. In equation (2), the term that represents this residual SI in DL is clearly specified. Furthermore, $I_{n,b,k}^\mathrm{dl}$ represents the DL inter-cell interference that user $n$ in cell $b$ experiences over subchannel $k$ and is calculated as given in (3).
	\small
	\begin{equation}
	I_{n,b,k}^\mathrm{dl} = \sum_{b' \in {\mathcal{B}}, b' \neq b} \sum_{m \in {\mathcal{N}}, m \neq n} a_{m,b',k} \big(p_{m,b',k} h_{b',n,k} + q_{m,b',k} h_{m,n,k}\big).
	\end{equation}
	\normalsize
	Similarly, we define the UL SINR of user $n$ in cell $b$ over subchannel $k$ as
	\begin{equation}
	\gamma_{n,b,k}^\mathrm{ul}(\mathbf{p},~\mathbf{q}) = \frac{q_{n,b,k} h_{n,b,k}} {p_{n,b,k} \Delta_\mathrm{bs}+ I_{n,b,k}^\mathrm{ul}+ \sigma^2},
	\end{equation}
	with $\Delta_\mathrm{bs}$ and $I_{n,b,k}^\mathrm{ul}$ denote SI-cancellation factor of SBSs and inter-cell interference of user $n$ in UL over subchannel $k$, respectively, where
	\begin{equation}
	I_{n,b,k}^\mathrm{ul} = \sum_{b' \in {\mathcal{B}}, b' \neq b} \sum_{m \in {\mathcal{N}}} a_{m,b',k} \big(p_{m,b',k} h_{b',b,k} + q_{m,b',k} h_{m,b,k}\big).
	\end{equation}
	
	Note that SI is not the only additional type of interference that networks' nodes have to deal with in IBFD networks. In fact, as the result of increased frequency reuse factor in IBFD communications, inter-cell interference is also significantly intensified. For instance, as can be seen in (3) and (5), in IBFD communications, users as well as SBSs can cause interference on one another, whereas in traditional HD systems, due to the separation of UL and DL frequency bands, such interference did not exist.
	
	Based on the Shannon formula, the data rate of user $n$ in cell $b$ over subchannel $k$ in DL and UL is given, respectively, as follows:
	\begin{equation}\label{rate}
	r_{n,b,k}^\mathrm{dl} (\mathbf{a},~\mathbf{p},~\mathbf{q}) = a_{n,b,k}~\log_2{(1+\gamma_{n,b,k}^\mathrm{dl}(\mathbf{p},~\mathbf{q}))},
	\end{equation}
	and
	\begin{equation}\label{rate}
	r_{n,b,k}^\mathrm{ul} (\mathbf{a},~\mathbf{p},~\mathbf{q}) = a_{n,b,k}~\log_2{(1+\gamma_{n,b,k}^\mathrm{ul}(\mathbf{p},~\mathbf{q}))},
	\end{equation}
	 where $\mathbf{a}\in \mathbb{Z}^{1 \times NBK}$ denotes the joint BS,~subchannel assignment vector. Accordingly, the total data rate of user $n$ in DL can be formulated as follows:
	\begin{equation}\label{totalrateuser}
	R_n^\mathrm{dl} (\mathbf{a},~\mathbf{p},~\mathbf{q}) =\sum_{b\in\mathcal{B}}\sum_{k\in \mathcal{K}}{r_{n,b,k}^\mathrm{dl}(\mathbf{a},~\mathbf{p},~\mathbf{q})}.
	\end{equation}
	Similar to (8), the total data rate of user $n$ in UL, denoted by $R_n^\mathrm{ul}(\mathbf{a},~\mathbf{p},~\mathbf{q})$, is obtained as $
	R_n^\mathrm{ul} (\mathbf{a},~\mathbf{p},~\mathbf{q}) =\sum_{b\in\mathcal{B}}\sum_{k\in \mathcal{K}}{r_{n,b,k}^\mathrm{ul}(\mathbf{a},~\mathbf{p},~\mathbf{q})}.
	$To guarantee users' QoS, a minimum data rate represented by ${R}_\mathrm{min}^\mathrm{dl}$ in DL and ${R}_\mathrm{min}^\mathrm{ul}$ in UL, should be provided for each user and we have
	\begin{equation}
	\begin{aligned}
	R_{n}^\mathrm{dl}(\mathbf{a},~\mathbf{p},~\mathbf{q})  \geq{R}^\mathrm{dl}_\mathrm{min},~~ \forall n\in \mathcal{N},\\
	R_{n}^\mathrm{ul}(\mathbf{a},~\mathbf{p},~\mathbf{q})  \geq{R}^\mathrm{ul}_\mathrm{min},~~ \forall n\in \mathcal{N}.
	\end{aligned}
	\end{equation}
	The total network throughput, denoted by $R^T(\mathbf{a},~\mathbf{p},~\mathbf{q})$, is obtained as follows:
	\begin{equation}\label{R^T}
	R^T(\mathbf{a},~\mathbf{p},~\mathbf{q}) = \sum_{n \in \mathcal{N}} \big(R_{n}^\mathrm{ul}(\mathbf{a},~\mathbf{p},~\mathbf{q})+R_{n}^\mathrm{dl}(\mathbf{a},~\mathbf{p},~\mathbf{q})\big),
	\end{equation}
	To compute the total energy consumption of network, we use the following energy consumption model in which both transmit power consumption and circuit energy consumption of devices are taken into account, and there are coefficients that represent the efficiency of power amplifiers in network devices, i.e.,
	\begin{equation}\label{energy}
	\begin{aligned}
	&E^T(\mathbf{a},~\mathbf{p},~\mathbf{q}) = \\&\sum_{b \in \mathcal{B}}\sum_{n \in \mathcal{N}}\sum_{k\in \mathcal{K}} a_{n,b,k}(\frac{1}{\kappa}p_{n,b,k}+\frac{1}{\psi}q_{n,b,k})
	+ N P^\mathrm{u}_\mathrm{c}+BP_c^\mathrm{bs}.
	\end{aligned}
	\end{equation}
	In the above equation, $P^\mathrm{u}_\mathrm{c}$ and $P^\mathrm{bs}_\mathrm{c}$ denote the circuit energy consumption of user device and SBS, respectively, and $\kappa$ and $\psi$ are power amplifier efficiency in SBS and user device, in that order.
	In this paper, we define EE as the ratio of system throughput to the corresponding network energy consumption, and denote it by $\eta(\mathbf{a},~\mathbf{p},~\mathbf{q})$, where
	\begin{equation}
	\eta(\mathbf{a},~\mathbf{p},~\mathbf{q}) = \frac{R^T(\mathbf{a},~\mathbf{p},~\mathbf{q})}{E^T(\mathbf{a},~\mathbf{p},~\mathbf{q})}.
	\end{equation}
	
	\section{Problem statement and Proposed Solution}
	In this section, we will first mathematically  model the optimization problem of joint BS,~subchannel assignment, and power allocation for EE maximization, and state the main challenges in tackling this problem. Then, we  present our proposed algorithm to tackle the stated problem.
	\subsection{Problem Formulation}
	The problem of joint BS,~subchannel assignment and power allocation for maximizing system's EE under QoS and maximum transmit power constraints, is formally stated as below:
	\begin{equation}\label{first}
	\begin{aligned}
	&\max_{\mathbf{a},~\mathbf{p},~\mathbf{q}} \eta(\mathbf{a},~\mathbf{p},~\mathbf{q})\\
	&\text{subject to:}\\
	&C_1: \sum_{n \in \mathcal{N}}\sum_{k\in \mathcal{K}} a_{n,b,k} p_{n,b,k}\leq p_\mathrm{max},~\forall b\in \mathcal{B},\\
	&C_2: \sum_{b \in \mathcal{B}}\sum_{k\in \mathcal{K}} a_{n,b,k} q_{n,b,k} \leq p^\mathrm{u}_\mathrm{max},~\forall n\in \mathcal{N},\\
	&C_3: R_n^\mathrm{dl}(\mathbf{a},~\mathbf{p},~\mathbf{q}) \geq R_\mathrm{min}^\mathrm{dl},~ \forall n\in \mathcal{N},\\
	&C_4: R_n^\mathrm{ul}(\mathbf{a},~\mathbf{p},~\mathbf{q}) \geq R_\mathrm{min}^\mathrm{ul},~ \forall n\in \mathcal{N},\\
	&C_5: \sum_{n \in \mathcal{N}} a_{n,b,k}\leq 1,~ \forall b\in \mathcal{B},~k\in \mathcal{K},\\
	&C_6: a_{n,b,k}\in \{0,1\},~ \forall b\in \mathcal{B},~ n\in\mathcal{N},~ \forall k \in \mathcal{K}.\\
	&\textcolor{black}{C_7: a_{n,b,k}+\hspace{-4mm}\sum_{b' \in \mathcal{B}, b \neq b'}\hspace{-4mm} a_{n,b',k'}\leq 1,~\forall n \in \mathcal{N},~b\neq b',~\forall k,k' \in \mathcal{K}.}\\
	\end{aligned}
	\end{equation}
	
	In the optimization problem (\ref{first}), constraints $C_1$ and $C_2$ are related to  transmit power feasibility. Constraint $C_1$ indicates that the total transmit power of SBSs should not exceed their maximum threshold which is denoted by $p_\mathrm{max}$, and $C_2$ restricts users maximum transmit power to $p_\mathrm{max}^\mathrm{u}$. {\color{black} It is worth mentioning that since we are considering IBFD communications, we have to take both user devices' and BSs' maximum transmit power thresholds into account}. In constraints $C_3$ and $C_4$, a minimum rate requirement is guaranteed for each user in DL and UL, respectively. Constraint $C_5$ indicates that each subchannel can be allocated to at most one user in each cell and in $C_6$, the binary nature of subchannel allocation variable is implied.~\textcolor{black}{Finally,~$C_7$ ensures that a given subchannel can be assigned to a given user by only one BS. That is, each user in the network is restricted to connect to only one BS through optimization process.}
	
	Due to the binary joint BS and subchannel allocation variables, the interference included in rate function, and the fractional form of the objective function, problem (\ref{first}) is a mixed-integer non-linear programming (MINLP) problem which is generally difficult to solve. The challenges that make the above optimization problem complicated are listed below.
	\begin{itemize}
		\item Objective function $\eta(\mathbf{a},~\mathbf{p},~\mathbf{q})$ is in the  fractional form  with respect to $\mathbf{p}$ and thus non-convex.
		\item Since the multiplication of two variables is non-convex, in constraint $C_1$, the term $a_{n,b,k}p_{n,b,k}$, and in $C_2$, $a_{n,b,k}q_{n,b,k}$ pose a challenge in tackling (\ref{first}). Moreover, since in both $r_{n,b,k}^\mathrm{dl} (\mathbf{a},~\mathbf{p},~\mathbf{q})$ and $r_{n,b,k}^\mathrm{ul} (\mathbf{a},~\mathbf{p},~\mathbf{q})$, joint BS and subchannel assignment variable is multiplied by a function of transmit power (as given in (6) and (7)), constraints $C_3$ and $C_4$ are also non-convex.
		\item The SI as well as the inter-cell interference incorporated in the rate function,~make both constraints $C_3$ and $C_4$ non-convex. 	
		\item Presence of binary joint BS and~subchannel assignment variable, $\mathbf{a}$, which turns (\ref{first}) into MINLP.
	\end{itemize}
	In the following section, we first restate problem (\ref{first}) as a MOOP, whose purpose is to maximize system throughput and minimize energy consumption, simultaneously. Afterwards, our proposed algorithm would be presented.
	\subsection{Equivalent MOOP formulation of the Objective Function}
	As given in (12), EE is the ratio of system's aggregate throughput and energy consumption. Since both $R^T(\mathbf{a},~\mathbf{p},~\mathbf{q})$ and $E^T(\mathbf{a},~\mathbf{p},~\mathbf{q})$ are positive functions of $\mathbf{a}$, $\mathbf{p}$ and $\mathbf{q}$, we can conclude that maximization of $\eta(\mathbf{a},~\mathbf{p},~\mathbf{q})$ is equivalent to maximizing $R^T(\mathbf{a},~\mathbf{p},~\mathbf{q})$ while minimizing $E^T(\mathbf{a},~\mathbf{p},~\mathbf{q})$, simultaneously. To this end, we reformulate (\ref{first}) as an equivalent MOOP that is given in (\ref{second}).
	\begin{equation}\label{second}
	\begin{aligned}
	&f_1: \min_{\mathbf{a},~\mathbf{p},~\mathbf{q}}~E^{T}(\mathbf{a},~\mathbf{p},~\mathbf{q})\\
	&f_2: \max_{\mathbf{a},~\mathbf{p},~\mathbf{q}}~R^{T}(\mathbf{a},~\mathbf{p},~\mathbf{q})\\
	&\text{subject to:} ~~C_1 - C_7.
	\end{aligned}
	\end{equation}
	The first objective of the optimization problem (\ref{second}), $f_1$, is to minimize system energy consumption, and the second one, $f_2$, is to maximize system's throughput and its constraint set is the same as that of (\ref{first}). 
	\begin{Proposition}
	\textcolor{black}{The solution to the MOOP given in (14) includes the solution of the EE-maximization
problem introduced in (13)  as a special case.}
	\end{Proposition}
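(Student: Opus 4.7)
The plan is to establish the claim by showing that any optimal solution of the scalar EE-maximization problem (13) is itself a Pareto-optimal point of the MOOP (14), so that the Pareto front generated by the MOOP necessarily contains the EE-optimal operating point as one of its elements. Since solving the MOOP (here via the $\epsilon$-constraint method) recovers the entire Pareto front, the EE-maximizer then appears as a special case of the MOOP's solution set.

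The core argument is a contradiction. First I would fix an optimal triple $(\mathbf{a}^*,\mathbf{p}^*,\mathbf{q}^*)$ of (13) and assume, toward a contradiction, that it is not Pareto optimal for (14). By the definition of Pareto dominance there would then exist a feasible triple $(\mathbf{a}',\mathbf{p}',\mathbf{q}')$ with
\begin{equation*}
E^T(\mathbf{a}',\mathbf{p}',\mathbf{q}') \le E^T(\mathbf{a}^*,\mathbf{p}^*,\mathbf{q}^*) \quad \text{and} \quad R^T(\mathbf{a}',\mathbf{p}',\mathbf{q}') \ge R^T(\mathbf{a}^*,\mathbf{p}^*,\mathbf{q}^*),
\end{equation*}
at least one of the two inequalities being strict. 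Both $R^T$ and $E^T$ are strictly positive on the feasible set, $R^T$ because of the per-user minimum-rate constraints $C_3,C_4$, and $E^T$ because of the constant circuit-power terms $NP_\mathrm{c}^\mathrm{u}+BP_\mathrm{c}^\mathrm{bs}$ appearing in (11). Dividing the two inequalities therefore yields
\begin{equation*}
\eta(\mathbf{a}',\mathbf{p}',\mathbf{q}') \;=\; \frac{R^T(\mathbf{a}',\mathbf{p}',\mathbf{q}')}{E^T(\mathbf{a}',\mathbf{p}',\mathbf{q}')} \;>\; \frac{R^T(\mathbf{a}^*,\mathbf{p}^*,\mathbf{q}^*)}{E^T(\mathbf{a}^*,\mathbf{p}^*,\mathbf{q}^*)} \;=\; \eta(\mathbf{a}^*,\mathbf{p}^*,\mathbf{q}^*),
\end{equation*}
which contradicts the optimality of $(\mathbf{a}^*,\mathbf{p}^*,\mathbf{q}^*)$ for (13). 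Hence no such $(\mathbf{a}',\mathbf{p}',\mathbf{q}')$ exists, so $(\mathbf{a}^*,\mathbf{p}^*,\mathbf{q}^*)$ lies on the Pareto front of (14), which is precisely what the proposition asserts.

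The argument is essentially self-contained and I do not anticipate any deep technical obstacle. The only point that must be handled carefully is the strictness in the monotonicity step: a non-strict improvement in one of $R^T,E^T$ together with a strict improvement in the other must be shown to produce a strictly larger ratio. This is immediate from the strict positivity of $R^T$ and $E^T$ noted above, but it is the place where the proof would break down if, for example, the energy model in (11) were permitted to drive $E^T$ to zero, so it is worth flagging explicitly before concluding.
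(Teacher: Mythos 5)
Your proof is correct, but it takes a genuinely different route from the paper's. You argue directly at the level of Pareto dominance: assuming the EE-maximizer of (13) is dominated in (14), you divide the two dominance inequalities (using strict positivity of $R^T$ from the minimum-rate constraints and of $E^T$ from the circuit-power terms) to contradict optimality of the ratio, concluding that the EE-optimal point sits on the Pareto front and is therefore recovered when the full front is generated. The paper instead goes through Dinkelbach's fractional-programming theory: it recalls that $x^*$ maximizes the ratio iff it solves $\min_x f(x)-\psi^* g(x)$ with optimal value zero, then compares this parametric problem with the $\epsilon$-constraint scalarization $\min f(x)$ s.t.\ $g(x)\ge\epsilon$ and argues that for the right $\epsilon$ the scalarized problem returns the fractional optimum. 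Your argument is more elementary and self-contained (no appeal to Dinkelbach), and it makes explicit exactly where positivity of numerator and denominator is needed, which the paper leaves implicit; the paper's version has the advantage of tying the proposition directly to the $\epsilon$-constraint machinery that the algorithm actually sweeps over via $\epsilon=\delta R_{\max}$, so the "special case" corresponds concretely to one value of $\epsilon$ on the front. The one step you should make explicit is the bridge from "Pareto optimal" to "contained in the MOOP's solution set": this relies on the $\epsilon$-constraint method generating the entire Pareto front (the paper's Remark~1), which you invoke only in passing.
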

	\begin{proof}
	\textcolor{black}{Please see Appendix A.}
	\end{proof}
	{\color{black} Even though the MOOP (\ref{second}) contains two competing objective functions, we can still find a solution for it that satisfies the predefined conditions of Pareto optimality.~In contrast to the Dinkelbach method, which is only applicable to an optimization problem with fractional function, MOOP  can be adopted for any optimization problem even if the number of objective function exceeds two \cite{Ata_TWC,Mili}. Moreover,  MOOP provides a SE-EE trade-off with lower computational complexity compared to the Dinkelbach method.}
	\begin{definition}
		Assume that $\mathcal{X}$ denotes the feasible region of the optimization problem
		\begin{equation*}
		\min \{f_1(x),~f_2(x), ... , f_n(x)\} ~~ \text{subject to: } x \in \mathcal{X}.
		\end{equation*}
		
		A point $x^*\in \mathcal{X}$, is Pareto optimal, if and only if for any other point $x \in \mathcal{X}$, we  have  $f_i(x) \geq f_i(x^*)~\forall i \in \{1,2,...,n\}$, with at least one $f_i(x)> f_i(x^*)~\forall i \in \{1,2,...,n\}$.
	\end{definition}
	In order to obtain the Pareto optimal fronts for (\ref{second}), we employ $\epsilon$-constraint method, in which one of the objective functions is chosen as the primary objective and the rest of objective functions are moved to the constraint set \cite{article}. Since system throughput, $R^T(\mathbf{a},~\mathbf{p},~\mathbf{q})$, is by itself a function of system's transmit power, it can be said that the effect of energy consumption on system's EE is generally much more substantial than that of system throughput. Thus, in this paper we keep $f_1$ as the primary objective function and move $f_2$ to the constraint set. Using the $\epsilon$-constraint method, the new optimization problem would be:
	\begin{equation}\label{third}
	\begin{aligned}
	&f_1: \min_{\mathbf{a},~\mathbf{p},~\mathbf{q}}E^{T}(\mathbf{a},~\mathbf{p},~\mathbf{q})\\
	&\text{subject to:} \\
	& C_0: R^T(\mathbf{a},~\mathbf{p},~\mathbf{q}) \geq \epsilon,\\
	&C_1 - C_7.
	\end{aligned}
	\end{equation}
	\begin{Remark}
		Since $\epsilon$-constraint method generates the whole Pareto fronts \cite{article}, the solution set obtained by solving (\ref{third}) contains the solution of EE-maximization
		problem (\ref{first}).
	\end{Remark}
	%
	 Constraint $C_0$ in (15) requires the total throughput of the network to be greater than $\epsilon$. Due to the multiplication of variables, $\mathbf{a}$ with both  $\mathbf{p}$ and  $\mathbf{q}$, the objective function of (\ref{third}) is still non-convex and thus challenging to address.
	
	{\bf Significance of $\epsilon$:} It is obvious that the feasibility of (\ref{third}), as well as  the closeness of its solution to the solution of problem (\ref{first}), greatly depends on the value of $\epsilon$. This fact turns $\epsilon$ into a sensitive parameter, whose value plays a major role both in prioritizing the objective functions in (\ref{second}) and finding an energy efficient trade-off between them. Moreover, we are still faced with the same challenges in dealing with the non-convex constraint set of (\ref{first}).
	
	In the following subsection, we introduce our proposed algorithm  for dealing with the non-convexity of feasible set in (\ref{third}). Next, in section IV, we present our proposed method to estimate the value of $\epsilon$ that results in EE maximization.
	\vspace{-3mm}
	\subsection{Equivalent Reformulation of Constraints Involving Binary and Continuous Variable Product}
	In order to address the non-convex optimization problem (13), we first deal with the problem of variables multiplication in constraints $C_1$ and $C_2$. In the left-hand side of these constraints, it is implied that if subchannel $k$ is not allocated to user $n$ in cell $b$ ($a_{n,b,k} =0 $), the transmit power of this user over $k$ should be zero in both UL and DL ($p_{n,b,k}=q_{n,b,k}=0$). Based on this explanation, we can restate $C_1$ and $C_2$ as below:
	\begin{align}
	&~C'_1: \sum_{n \in \mathcal{N}}\sum_{k\in \mathcal{K}} p_{n,b,k}\leq p_\mathrm{max},~\forall b\in \mathcal{B},\\
	&~C''_1: p_{n,b,k} \leq a_{n,b,k}p_\mathrm{max},~\forall b\in \mathcal{B},~\forall n \in \mathcal{N},~\forall k \in \mathcal{K},\\
	&~C'_2: \sum_{b \in \mathcal{B}} \sum_{k\in \mathcal{K}} q_{n,b,k}\leq p^\mathrm{u}_\mathrm{max},~\forall n \in \mathcal{N},\\
	&~C''_2: q_{n,b,k} \leq a_{n,b,k}p^\mathrm{u}_\mathrm{max},~\forall b\in \mathcal{B},~\forall n \in \mathcal{N},~\forall k \in \mathcal{K}.
	\end{align}
	
	It is clear that the feasible region defined by (16) and (17) is equal to that of constraint $C_1$. Thus, we can substitute $C_1$ with $C'_1$ and $C''_1$ without altering the feasible set of (\ref{third}). Similarly, constraint $C_2$ is replaced by (18) and (19). Through this method, we can easily deal with the non-convex constraints $C_1$ and $C_2$, by using their equivalent convex forms \cite{CL}.
	
	Furthermore, since in (17) and (19), users are restricted to transmit only over their assigned subchannels (if $a_{n,b,k} =0$, $q_{n,b,k}$ and $p_{n,b,k}$ must be zero),  SINR of users and thus their data rate over subchannels that are not allocated to them would be zero. Therefore, we can restate (6) and (7) as $r_{n,b,k}^\mathrm{dl}(\mathbf{p},~\mathbf{q}) = \log_2(1+\gamma_{n,b,k}^\mathrm{dl})$ and $r_{n,b,k}^\mathrm{ul}(\mathbf{p},~\mathbf{q})= \log_2(1+\gamma_{n,b,k}^\mathrm{ul})$, respectively. Similarly, we can omit $a_{n,b,k}$ from (11) and restate the objective function of (\ref{third}), $E^T(\mathbf{a},\mathbf{p},\mathbf{q})$, as an affine function that is given by
	\small
	\begin{equation}
	\begin{aligned}
	E^T(\mathbf{p},~\mathbf{q}) = &\sum_{b \in \mathcal{B}}\sum_{n \in \mathcal{N}_b}\sum_{k\in \mathcal{K}} (\frac{1}{\kappa}p_{n,b,k}+\frac{1}{\psi}q_{n,b,k})
	+ N P^\mathrm{u}_\mathrm{c}+BP_c^\mathrm{bs}.
	\end{aligned}
	\end{equation}
	\normalsize
	\subsection{Equivalent Reformulation of Binary Constraints}
	Another challenge in solving (\ref{third}) is the integer joint BS and subchannel assignment variable, $a_{n,b,k}$. This binary variable turns (\ref{third}) into a MINLP, which is difficult to solve in an acceptable time span. To address this issue, we take an approach similar to \cite{Ata_WCL,subchannel}, and replace constraint $C_6$ with the following inequalities:
	\begin{align}
	&C'_6: 0 \leq a_{n,b,k} \leq 1,~\forall b\in \mathcal{B},~\forall n\in \mathcal{N}, \forall k \in \mathcal{K},\\
	&C''_6: \sum_{b\in \mathcal{B}.} \sum_{n\in \mathcal{N}}\sum_{k\in \mathcal{K}}(a_{n,b,k} -a^2_{n,b,k})\leq 0,
	\end{align}
	In $C'_6$ it is stated that the variable $a_{n,b,k}$ is continuous with values in the range [$0,1$]. However, in $C''_6$ the value of $a_{n,b,k}$ is restricted to 0 and 1, since the only two numbers that fit in (22), belong to the set $\{0,1\}$. Therefore, the intersection of constraints $C'_6$ and $C''_6$, is a region that is equivalent to that of $C_6$.
	Nevertheless, as $C''_6$ is concave and greater than or equal to zero, this constraint does not comply with the standard form of inequality constraints in convex optimization problems. 
	
	To deal with this issue, we remove constraint $C''_6$ from the constraint set and instead add it as a penalty function, with a weighting factor, denoted by $\lambda$, to the objective function. {In fact $\lambda$ acts as a penalty factor to penalize the objective function when $a_{n,b,k}$ is not binary.} After this modification, we get the equivalent reformulation of the original problem as follows:
	\begin{equation}\label{last+1}
	\begin{aligned}
	&\min_{\mathbf{a},~\mathbf{p},~\mathbf{q}} ~E^T(\mathbf{p},\mathbf{q})+\lambda\Big(\sum_{b \in \mathcal{B}}\sum_{n \in \mathcal{N}_b}\sum_{k\in \mathcal{K}}\big(a_{n,b,k}-a_{n,b,k}^2\big)\Big)\\
	& \text{subject to:}~~ C_0,~C'_1,~C'_2,~C''_1,~C''_2,~C_3,~C_4,~C_5,~C'_6,~C_7.
	\end{aligned}
	\end{equation}
	\begin{Remark}
		It can be easily demonstrated that the optimization problem (\ref{last+1}) is equivalent to (\ref{third}). For more details refer to \cite{CL}.
	\end{Remark}
	
	\subsection{Convex Approximation of the Objective Function via Majorization-Minimization}
	To tackle the non-convexity of objective function in the above problem, we first rewrite the objective function as follows:
	\begin{equation}
	e (\mathbf{a},~\mathbf{p},~\mathbf{q}) = e_1( \mathbf{a},~\mathbf{p},~\mathbf{q}) - \lambda e_2(\mathbf{a}),
	\end{equation}
	where $e_1( \mathbf{a},~\mathbf{p},~\mathbf{q})=E^T(\mathbf{p},~\mathbf{q})+\lambda \big(\sum_{b \in \mathcal{B}}\sum_{n \in \mathcal{N}}\sum_{k\in \mathcal{K}}a_{n,b,k}\big)$ and\\
	$e_2(\mathbf{a})=\big(\sum_{b \in \mathcal{B}}\sum_{n \in \mathcal{N}}\sum_{k\in \mathcal{K}}a_{n,b,k}^{2}\big)$.
	The equality given in (24) consists of two convex functions, $e_1(\mathbf{a},~\mathbf{p},~\mathbf{q})$ and $\lambda e_2(\mathbf{a})$. However, the subtraction of these convex functions is not necessarily convex. To tackle this issue, we find a convex approximation for $e(\mathbf{a},~\mathbf{p},~\mathbf{q})$ by using majorization minimization (MM) method \cite{MM}. In this method, a series of surrogate functions are constructed that approximate the originally non-convex function. Here, we use Taylor approximation for constructing our surrogate function. To do so, in iteration number $t$ we will have:
	\begin{equation}\label{four}
	\begin{aligned}
	\tilde{e_2}(\mathbf{\mathbf{a}})= e_2(\mathbf{a}^{t-1})+\nabla_{\mathbf{a}} e_2^T(\mathbf{a}^{t-1})(\mathbf{a}-\mathbf{a}^{t-1}).
	\end{aligned}
	\end{equation}
	
	Now we can replace $e_2(\textbf{a})$  with its affine approximation,
	$\tilde{e}_2(\textbf{a})$ and since subtraction of a convex function and an affine function is convex, the problem of non-convex objective function would be solved. Thus we will have
	\begin{equation}\label{last-1}
	\begin{aligned}
	&\min_{\mathbf{a},~\mathbf{p},~\mathbf{q}}e_1(\mathbf{a},~\mathbf{p},~\mathbf{q}) - \lambda \tilde{e}_2(\mathbf{a})\\
	& \text{subject to:}~~ C_0,~C'_1.~C'_2,~C''_1,~C''_2,~C_3,~C_4,~C_5,~C'_6,~C_7.
	\end{aligned}
	\end{equation}
	
	Even after the above transformations, due to the non-convexity of rate functions $R_n^\mathrm{ul}(\mathbf{p},~\mathbf{q})$ and $R_n^\mathrm{dl}(\mathbf{p},~\mathbf{q})$, optimization problem (\ref{last-1}) is still intractable. Let us rewrite $R_n^\mathrm{dl}(\mathbf{p},~\mathbf{q})$ as follows:
	\begin{equation}
	R_n^\mathrm{dl}(\mathbf{p},~\mathbf{q}) = f^\mathrm{dl}_{n}(\mathbf{p},~\mathbf{q}) - g^\mathrm{dl}_{n}(\mathbf{p},~\mathbf{q}),
	\end{equation}
	where
	\small
	\begin{equation}
	\begin{aligned}
	f^\mathrm{dl}_{n}(\mathbf {p},~\mathbf{q}) =\sum_{b\in\mathcal{B}}\sum_{k \in {\mathcal{K}}}&\log_2 (p_{n,b,k}h_{b,n,k}+q_{n,b,k}\Delta_\mathrm{u}+I_{n,b,k}^\mathrm{dl}+\sigma^2),
	\end{aligned}
	\end{equation}
	\normalsize
	and
	\begin{equation}
	g^\mathrm{dl}_{n}(\mathbf{p},~\mathbf{q}) = \sum_{b\in\mathcal{B}}\sum_{k \in {\mathcal{K}}} \log_2 (q_{n,b,k}\Delta_\mathrm{u}+I_{n,b,k}^\mathrm{dl}+\sigma^2).
	\end{equation}
	Now we can use MM approach and approximate $g_n^\mathrm{dl}(\textbf{p},~\textbf{q})$ as follows:
	\begin{align}
	\tilde{g}^\mathrm{dl}_{n}(\mathbf{p},~\mathbf{q}) = &g_{n}^\mathrm{dl}(\mathbf{q}^{t-1},\mathbf{p}^{t-1})+\nabla_\mathbf{q}g^T(\mathbf{q}^{t-1}). (\mathbf{q}-\mathbf{q}^{t-1})+\nonumber\\&\nabla_\mathbf{p}g^T(\mathbf{p}^{t-1}). (\mathbf{p}-\mathbf{p}^{t-1})
	\end{align}
	
	Thus the convex approximation of DL rate function, ${R}_n^\mathrm{dl}(\mathbf{p},~\mathbf{q})$, would be
	\begin{equation}
	\tilde{R}_n^\mathrm{dl}(\mathbf{p},~\mathbf{q}) = f^\mathrm{dl}_{n}(\mathbf{p},~\mathbf{q}) - \tilde{g}^\mathrm{dl}_{n}(\mathbf{p},~\mathbf{q}).
	\end{equation}
	
	Similarly, the approximate UL data rate is
	\begin{equation}
	\tilde{R}_n^\mathrm{ul}(\mathbf{p},~\mathbf{q}) = f^\mathrm{ul}_{n}(\mathbf{p},~\mathbf{q}) - \tilde{g}^\mathrm{ul}_{n}(\mathbf{p},~\mathbf{q}),
	\end{equation}
	where,
	\small
	\begin{equation}
	\begin{aligned}
	f^\mathrm{ul}_{n}(\mathbf{p},~\mathbf{q}) =\sum_{b\in\mathcal{B}}\sum_{k \in {\mathcal{K}}} &\log_2 (q_{n,b,k}h_{n,b,k}+p_{n,b,k}\Delta_\mathrm{bs}+I_{n,b,k}^\mathrm{ul}+\sigma^2),
	\end{aligned}
	\end{equation}
	\normalsize
	\begin{align}
	\tilde{g}^\mathrm{ul}_{n}(\mathbf{p},~\mathbf{q}) = &g_{n}^\mathrm{ul}(\mathbf{p}^{t-1},\mathbf{q}^{t-1})+\nabla_\mathbf{p}g^T(\mathbf{p}^{t-1}). (\mathbf{p}-\mathbf{p}^{t-1})+\nonumber\\&\nabla_\mathbf{q}g^T(\mathbf{q}^{t-1}). (\mathbf{q}-\mathbf{q}^{t-1})
	\end{align}
	and 
	\begin{equation}
	g^\mathrm{ul}_{n}(\mathbf{p},~\mathbf{q}) = \sum_{b\in\mathcal{B}}\sum_{k \in {\mathcal{K}}} \log_2 (p_{n,b,k}\Delta_\mathrm{bs}+I_{n,b,k}^\mathrm{ul}+\sigma^2).
	\end{equation}
	
	Regarding the above transformations, we define the approximate total data rate of system as:
	\begin{equation}
	\tilde{R}^{T}(\mathbf{p},~\mathbf{q}) = \sum_{b\in \mathcal{B}}\sum_{n \in \mathcal{N}} (\tilde{R}_n^\mathrm{dl}(\mathbf{p},~\mathbf{q})+ \tilde{R}_n^\mathrm{ul}(\mathbf{p},~\mathbf{q})).
	\end{equation}
	
	Finally, after these modifications, the resulting convex optimization problem would be:
	\begin{equation}\label{last}
	\begin{aligned}
	&\min_{\mathbf{a},~\mathbf{p},~\mathbf{q}}e_1(\mathbf{a},~\mathbf{p},~\mathbf{q}) - \lambda \tilde{e}_2(\mathbf{a})\\
	&\text{subject to:}\\
	&C_0: \tilde{R}^{T}(\mathbf{p},~\mathbf{q})\geq \epsilon,\\
	&C'_1: \sum_{n \in \mathcal{N}}\sum_{k\in \mathcal{K}} p_{n,b,k}\leq p_\mathrm{max},~\forall b\in \mathcal{B},\\
	&C'_2:\sum_{b \in \mathcal{B}} \sum_{k\in \mathcal{K}} q_{n,b,k} \leq p^\mathrm{u}_\mathrm{max},~\forall b\in \mathcal{B},~\forall n\in \mathcal{N},\\
	&C''_1: p_{n,b,k} \leq a_{n,b,k} p_\mathrm{max}, ~ \forall b\in \mathcal{B},~\forall n\in\mathcal{N},~\forall k \in \mathcal{K},\\
	&C''_2: q_{n,b,k}\leq a_{n,b,k} p^\mathrm{u}_\mathrm{max}, ~\forall b\in \mathcal{B},~ \forall n\in\mathcal{N},~\forall k \in \mathcal{K},\\
	&C_3: \tilde{R}_n^{dl}(\mathbf{p},~\mathbf{q}) \geq R_\mathrm{min}^\mathrm{dl},~ \forall n\in \mathcal{N},\\
	&C_4: \tilde{R}_n^{ul}(\mathbf{p},~\mathbf{q}) \geq R_\mathrm{min}^\mathrm{ul},~ \forall n\in \mathcal{N},\\
	&C_5: \sum_{n \in \mathcal{N}} a_{n,b,k}\leq 1,~ \forall b\in \mathcal{B},~\forall k\in \mathcal{K},\\
	& C'_6: 0\leq a_{n,b,k}\leq 1,~\forall b\in \mathcal{B},~\forall n\in\mathcal{N},\forall k \in \mathcal{K},\\
	&C_7: \sum_{b \in \mathcal{B}} a_{n,b,k}\leq 1,~ \forall n\in \mathcal{N},~k\in \mathcal{K}.\\
	\end{aligned}
	\end{equation}
	Optimization problem (\ref{last}) is a convex optimization problem. In order to solve this problem and obtain a locally optimal solution for problem (\ref{third}), here we employ the difference of convex functions (DC) programming \cite{DC}.~In DC programming, the iteration starts from a feasible initial point and iteratively solves the optimization problem and obtains a locally optimal solution eventually \cite{DC,NOMA_FD},~and [23].
	\begin{Proposition}
		The solution obtained for (\ref{last}) by incorporating DC approximation at the end of each iteration, is a locally optimal solution for the original problem (\ref{first}). Our proposed algorithm to solve (\ref{last}) is presented in \textbf{Algorithm 1}.
	\end{Proposition}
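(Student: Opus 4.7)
The plan is to prove Proposition~2 by chaining three ingredients: (i) the equivalences established in Proposition~1 and Remark~1, which reduce the EE-maximization problem (\ref{first}) to the MOOP (\ref{second}) and then, via the $\epsilon$-constraint method, to (\ref{third}); (ii) the reformulations in Sections~III-C and III-D, which turn (\ref{third}) into the penalised DC program (\ref{last+1}) without altering the feasible region; and (iii) a standard MM/DC convergence argument showing that a limit point of Algorithm~1's iterates is a KKT point of (\ref{last+1}), hence of (\ref{first}).

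First I would verify that the surrogate functions $\tilde{e}_2$, $\tilde{g}_n^{\mathrm{dl}}$, and $\tilde{g}_n^{\mathrm{ul}}$ satisfy the three properties required of a valid majoriser in the MM framework \cite{MM}: value matching at the expansion point $(\mathbf{a}^{t-1},\mathbf{p}^{t-1},\mathbf{q}^{t-1})$, gradient matching at that point, and a global bound inherited from convexity. Since $e_2$, $g_n^{\mathrm{dl}}$, and $g_n^{\mathrm{ul}}$ are convex and differentiable, their first-order Taylor expansions satisfy $\tilde{e}_2(\mathbf{a})\le e_2(\mathbf{a})$, $\tilde{g}_n^{\mathrm{dl}}(\mathbf{p},\mathbf{q})\le g_n^{\mathrm{dl}}(\mathbf{p},\mathbf{q})$, and $\tilde{g}_n^{\mathrm{ul}}(\mathbf{p},\mathbf{q})\le g_n^{\mathrm{ul}}(\mathbf{p},\mathbf{q})$ globally. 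Because these majorisers enter the objective and the rate constraints with a negative sign, the convex problem (\ref{last}) is a conservative inner approximation of (\ref{last+1}): every feasible point of (\ref{last}) is feasible for (\ref{last+1}), and the value of the penalised objective $e_1-\lambda\tilde{e}_2$ at the optimiser of (\ref{last}) upper-bounds that of (\ref{last+1}) evaluated at the same point. Value and gradient matching at the current iterate then force the surrogate and the original objective to agree up to first order, which is the critical ingredient for KKT consistency at a fixed point.

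Next I would argue convergence. Since (\ref{last}) is convex and solved exactly at every iteration, the sequence $\{e_1(\mathbf{a}^t,\mathbf{p}^t,\mathbf{q}^t)-\lambda \tilde{e}_2(\mathbf{a}^t)\}$ is non-increasing; because the constraints $C'_1,~C'_2,~C''_1,~C''_2,~C'_6,~C_7$ define a compact box, the objective is bounded below, so the sequence converges and admits a limit point $(\mathbf{a}^\star,\mathbf{p}^\star,\mathbf{q}^\star)$. At this limit, the gradient-matching property implies that the KKT stationarity and feasibility conditions of the surrogate (\ref{last}) coincide with those of the DC program (\ref{last+1}). Invoking Remark~1 to carry the KKT point of (\ref{last+1}) back to (\ref{third}), and Proposition~1 to carry it back to (\ref{first}), yields a locally optimal solution of the original EE problem, provided the limit point $\mathbf{a}^\star$ is binary. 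This last step is guaranteed for $\lambda$ sufficiently large, since $C'_6$ together with the penalty $\lambda\sum_{b,n,k}(a_{n,b,k}-a_{n,b,k}^2)$ constitutes an exact penalty reformulation of $C_6$ on the compact feasible set (see \cite{CL}).

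The hard part will be exactly this penalty-exactness step at the limit: one must ensure that the linearisation of the quadratic term $e_2$ does not relax the binarity enforcement, i.e. that the non-negative gap $e_2(\mathbf{a})-\tilde{e}_2(\mathbf{a})$ still vanishes at $(\mathbf{a}^\star,\mathbf{p}^\star,\mathbf{q}^\star)$. The standard argument shows that at a fixed point of the MM iteration, $\mathbf{a}^\star=\mathbf{a}^{t-1}=\mathbf{a}^t$, and hence $\tilde{e}_2(\mathbf{a}^\star)=e_2(\mathbf{a}^\star)$; the exact penalty property \cite{CL} then forces $a_{n,b,k}^\star\in\{0,1\}$ for every $(n,b,k)$ once $\lambda$ exceeds a threshold determined by the Lipschitz constant of $e_1$ on the feasible box. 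The remaining bookkeeping (monotonic descent, feasibility preservation across iterations, and equivalence of KKT conditions between (\ref{last}), (\ref{last+1}), (\ref{third}), and (\ref{first})) follows routinely from the MM framework of \cite{MM} and the DC-programming convergence results of \cite{DC}.
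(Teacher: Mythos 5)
Your overall route is the same as the paper's: a monotone-descent MM/DC argument showing the surrogate objective is non-increasing and tight at a fixed point, followed by chaining the equivalences of Remark~2, Remark~1, and Proposition~1 back to (\ref{first}). Your additional care about KKT consistency at the limit point and about exactness of the penalty (i.e.\ whether $\mathbf{a}^\star$ is actually binary for $\lambda$ large enough) goes beyond what the paper writes down, and is a welcome tightening rather than a deviation.

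There is, however, one concrete error in your majorisation step. The functions $g_n^{\mathrm{dl}}$ and $g_n^{\mathrm{ul}}$ in (29) and (35) are sums of logarithms of affine functions of $(\mathbf{p},\mathbf{q})$, hence \emph{concave}, not convex. Consequently their first-order Taylor expansions are global \emph{over}-estimators, $\tilde{g}_n^{\mathrm{dl}}\ge g_n^{\mathrm{dl}}$ and $\tilde{g}_n^{\mathrm{ul}}\ge g_n^{\mathrm{ul}}$, which is the opposite of what you assert. Taken literally, your inequality $\tilde{g}\le g$ would give $\tilde{R}_n=f_n-\tilde{g}_n\ge R_n$, so satisfying $\tilde{R}_n\ge R_{\min}$ in (\ref{last}) would \emph{not} imply $R_n\ge R_{\min}$, and the "conservative inner approximation" claim on which your feasibility-transfer from (\ref{last}) to (\ref{last+1}) rests would collapse. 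The correct statement is exactly the reverse: concavity of $g$ gives $\tilde{g}\ge g$, hence $\tilde{R}_n\le R_n$, so the rate constraints of (\ref{last}) are conservative and every point feasible for (\ref{last}) is feasible for (\ref{last+1}); this is the tight-lower-bound property the paper invokes for (31) and (32). Note that the convex-function direction you state is correct only for $e_2(\mathbf{a})=\sum a_{n,b,k}^2$, which is genuinely convex and enters the objective with a negative sign, so $-\lambda\tilde{e}_2\ge-\lambda e_2$ majorises the objective as required. With the signs for $g_n^{\mathrm{dl}}$ and $g_n^{\mathrm{ul}}$ corrected, the rest of your argument (descent, convergence on the compact feasible set, tightness at the fixed point, and the chain back to (\ref{first})) goes through and matches the paper's Appendix~B.
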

	\begin{proof}
	See Appendix B.
	\end{proof}

		\begin{algorithm}
		\caption{ Proposed Algorithm to solve Eq. (37)}
		\label{euclid}
		\begin{algorithmic}[1]
			\State Obtain the value of $R_\mathrm{max}$ by solving optimization problem (\ref{R_max}).
			\State Initialize iteration number $t=1$, $\delta = 0$, and step size $\nu$, with a positive value, and $\eta^{t} = 0$.
			\State \textbf{while} ($\delta \leq 1$)
			\State ~~~Set $\delta = \delta+\nu$.
			\State ~~~Calculate $\epsilon$ using (\ref{eps}).
			\State~~~Initialize $i=0$, maximum number of iteration $I_{\max}$, penalty factor $\lambda\gg1$ , and $\textbf{p}^{0} $, $\textbf{q}^{0}$, and $\textbf{a}^{0}$.
			\State~~~\textbf{Repeat}
			\State~~~~Update $\tilde{e}_{2}(\textbf{a})$, $\tilde{g}_n^\mathrm{dl}(\textbf{p},~\textbf{q})$, and $\tilde{g}_n^\mathrm{ul}(\textbf{p},~\textbf{q})$ using (25), (30), and (34), respectively.
			\State~~~~Solve problem (\ref{last}) and obtain $\textbf{a}^{i}$,~${\textbf{p}^{i}}$, and $\textbf{q}^{i}$.
			\State~~~~Set $i=i+1$.
			\State~~~\textbf{Until} convergence or $i=I_{\max}$
			\State~~~Set $\textbf{a}^{*}=\textbf{a}^{i}$, $\textbf{p}^{*}=\textbf{p}^{i}$, $\textbf{q}^{*}=\textbf{q}^{i}$.
			\State ~~~$\eta^t(\mathbf{a}^*,\mathbf{p}^*,\mathbf{q}^*) = \frac{\tilde{R}^T(\mathbf{p}^*,\mathbf{q}^*)}{E^T(\mathbf{p}^*,\mathbf{q}^*)}$.
			\State\textbf{end}
		\end{algorithmic}
	\end{algorithm}
	
	\section{Choice of $\epsilon$ and Computational Complexity of Algorithm~1}
	As explained in the previous section, constraint $C_0$ in optimization problem (\ref{last}) asserts that the total throughput of network, $\tilde{R}^T(\mathbf{a},~\mathbf{p},~\mathbf{q})$, should be greater than or equal to $\epsilon$. To further clarify the impact of $\epsilon$ on the optimization problem (\ref{last}), let us consider the following cases:
	\begin{itemize}
		\item[i)]if $\epsilon = 0$, optimization problem (\ref{last}) minimizes network's energy consumption.
		\item[ii)] if $\epsilon = R_\mathrm{max}$, assuming $R_\mathrm{max}$ is the maximum system throughput, the solution obtained for (\ref{last}) would be the solution of network throughput maximization problem.
		\item[iii)]  if $\epsilon \geq R_\mathrm{max}$, the optimization problem (\ref{last}) would be infeasible.
		\item[iv)]  if $0 < \epsilon < R_\mathrm{max}$, the  problem (\ref{last}) would be a multi-objective optimization problem.
	\end{itemize}
	
	Regarding the above cases, it can be deduced that the optimization problem (\ref{last}) and its obtained solution is extremely sensitive to the value of $\epsilon$. Furthermore, any change in the priority of the objective functions can be achieved by manipulating the value of this parameter. Namely, when the chosen value for $\epsilon$ is high, more emphasis is put on system throughput maximization, while lower values of $\epsilon$ results in higher priority for system energy consumption minimization.
	
	Furthermore, according to the above three cases we can also conclude that for an specific value of $\epsilon$, a trade-off between system's throughput and aggregate energy consumption would be derived that results in maximum EE. To find this specific value of $\epsilon$, when our goal is to maximize EE, we proposed the following algorithm. 
	
	From cases (i) and (ii), we can perceive that the maximum value that $\epsilon$ can take without making (\ref{last}) infeasible is $R_\mathrm{max}$. Since $R_{\max}$ is maximum system throughput, we can obtain its value by solving the following optimization problem:
	\begin{equation}\label{R_max}
	\begin{aligned}
	&\max_{\mathbf{a},~\mathbf{p},~\mathbf{q}}\tilde{R}^{T}(\mathbf{p},~\mathbf{q})-\lambda\Big( \sum_{b \in \mathcal{B}}\sum_{n \in \mathcal{N}_b}\sum_{k\in \mathcal{K}}a_{n,b,k}-\tilde{e}_2(\mathbf{a})\Big)\\
	& \text{subject to:}~~ C_0,~C'_1,~C'_2,~C''_1,~C''_2,~C_3,~C_4,~C_5,~C'_6,~C_7.
	\end{aligned}
	\end{equation}
	which is in fact the optimization problem of maximizing system's throughput.
	By solving the optimization problem in (\ref{R_max}), the maximum value of $\epsilon$ to avoid infeasibility of the problem can be determined.

	\subsection{Choosing $\epsilon$ for Energy Efficiency Maximization}
	Since different values of $\epsilon$ results in different trade-offs between system's throughput and energy consumption, to maximize network's EE, we should find a value for $\epsilon$ that corresponds to the maximum $\tilde{R}^T(\mathbf{a},~\mathbf{p},~\mathbf{q})$ to $E^T(\mathbf{a},~\mathbf{p},~\mathbf{q})$ ratio. To find this specific value of $\epsilon$, we use the equality below:
	\begin{equation}\label{eps}
	\epsilon = \delta R_\mathrm{max},
	\end{equation}
	where $\delta$ is a positive value in the range of ($0,1$]. Depending on the value of $\delta$, the ratio between system's throughput and energy consumption varies, however, for an specific $\delta$ this ratio reaches a maximum value. This observation is due to the fact that EE is by itself a trade-off between system's throughput and energy consumption. Therefore, by testing different values of $\delta$ (different $E^T(\mathbf{a},~\mathbf{p},~\mathbf{q})$ to $\tilde{R}^T(\mathbf{a},~\mathbf{p},~\mathbf{q})$ ratio), we can find the point in which maximum EE is achieved.

\subsection{Computational Complexity Analysis}
	In this subsection, we investigate the computational complexity of our proposed algorithm.~In optimization problem~(\ref{last}),~we have $3KNB$ decision variables and $B(1+K)+N(K+B+3KB+2)$ convex constraints.~Hence,~the time complexity of this problem is of order $\mathcal{O}(KNB)^{3}(B(1+K)+N(K+B+3KB+2))$~which is polynomial.~It is worth mentioning that an exhaustive search for an optimal
	scheme, merely for subchannel allocation,~would require the examination of all $BN^{K}$ possible choices,~which is considerably higher than the complexity of our proposed solution. Furthermore,~by employing D.C. programming with the interior point method, the approximate number of required iterations would be $$I=\frac{\log(\frac{B(1+K)3+N(K+B+3KB+2}{{t}^{0}\varrho})}{\log(\mu)},$$~where $t_{0}$ is the initial point, $0\leq\varrho\ll 1$ is the stopping criterion,~and $\mu$ is used for updating the accuracy of the method \cite{WCL,Ata_TWC}.  
	
	\section{Simulation Results and Discussions}
	In this section, we evaluate the performance of our proposed resource allocation algorithm through extensive simulations and compare the performance with the existing research works in [13], [14] , [15]. In our simulations, we consider OFDMA small-cell network with $B$ = 3 SBSs each with radius $100$ m and $K=32$ subchannels. We further assume that there are $N=12$ users randomly distributed in the network. The channel gain between a transmitter and a receiver is calculated using independent and identically distributed Rayleigh flat fading and the figures shown in this section are obtained by estimating the average of results over different realizations of path-loss as well as multi-path fading. \textcolor{black}{The large scale fading of the communication channel is computed according to the path loss formula and the small scale fading is modeled by Rayleigh fading and is formulated as Path-loss=$P_{L_{0}}+10 \theta \log (d)$,~where $d$ denotes the distance between user and BS $P_{L_{0}}$is the constant path-loss coefficient (128.1) dB which depends on the antenna characteristics, and $\theta$ denotes the path-loss exponent(in our case $\theta$ is equal to 3}. Without loss of generality, we assume that SBSs' and users' SI-cancellation factors are the same and $\Delta_\mathrm{u}$ = $\Delta_\mathrm{bs}$ = $\Delta$ = -70 dB and the step size $\nu$ in Algorithm. 1 is set to 0.1. The remaining parameters are given in Table I.
		\begin{center}
		\begin{table}
			\centering
			\caption{Simulation Parameters}
			\renewcommand{\arraystretch}{1.35}
			\begin{tabular}{|>{\centering\arraybackslash}p{3.8cm} |>{\centering\arraybackslash}p{2.8cm}|}
				\hline
				Parameters & Value \\ \hline\hline
				$\sigma^2$ & -120 dBm\\  \hline
				$p_{\max}$ & 32 dBm\\  \hline
				$p_{\max}^u$ & 23 dBm\\ \hline
				$R_\mathrm{min}^\mathrm{dl}$  & 4 bps/Hz\\ \hline
				$R_\mathrm{min}^\mathrm{ul}$  & 2 bps/Hz\\ \hline
				$P_\mathrm{c}^\mathrm{u}$  & 0.1 W\\ \hline
				$P_\mathrm{c}^\mathrm{SBS}$  & 1 W\\ \hline
				$\kappa$  & 38\%\\ \hline
				$\psi$  & 20\% \\ \hline
				$\lambda$  & $10^6$ \\ \hline
				Path loss exponent  &3 \\ \hline
			\end{tabular}
		\end{table}
	\end{center}
	\subsection{Impact of SI Cancellation}
	We first examine the effect of SI-cancellation factor, $\Delta$, on energy efficiency of IBFD networks. In Fig. 2, system energy efficiency vs. $\delta$  for different values of $\Delta$ is presented. We also draw a comparison between EE of IBFD and HD communications. For HD case, we assume that half of the existing subchannels are reserved for DL and the other half for UL communications, exclusively. In HD case, after dealing with the non-convex feasible set of EE maximization problem according to our proposed method, the resulting problem is dealt using Dinkelbach method which attains the optimal solution.
	
	As observed in Fig. 2, by decreasing $\Delta$, system EE would increase. This is due to the fact that lower values of $\Delta$ correspond to lower SI and thus higher EE. Furthermore, in each IBFD case, for a specific $\delta$, EE reaches its peak and then decreases. However, the value of $\delta$ for which maximum EE is obtained, varies from one case to another. For instance, when $\Delta = -110$ dB,  the maximum EE is achieved when $\delta = 0.6$ while for $\Delta = -70$ dB, system EE peaks at $\delta = 0.4$. This observation can be explained by considering the amount of data rate that a user can attain by consuming a unit of energy. When $\Delta = -110$ dB, because of the lower SI, users would be able to achieve a notable data rate, even while transmitting with a nominal transmit power. In this case, since the substantial growth in system throughput is worth the slight increase in system power consumption, the $\delta$ for which maximum EE is attained leans toward higher values. 
	
	In contrast, when interference is high, the value of $\delta$ corresponding to the maximum EE would get closer to its lower values. Another important observation in Fig. 2  is the superiority of IBFD communications'  performance compared to HD in most cases. This improved performance is the result of the higher flexibility of spectrum usage in IBFD communications. Note that, as $\delta$ gets closer to its optimal value (in peaks), the EE achieved using IBFD becomes higher than EE of HD in all but one case which is when $\Delta= -50$ dB. In $\Delta= -50$ dB, SI cancellation is too low and thus the performance of HD outperforms IBFD communications. 
	\begin{figure}
		\centering
		\includegraphics[width=9.00cm,height=6.00cm]{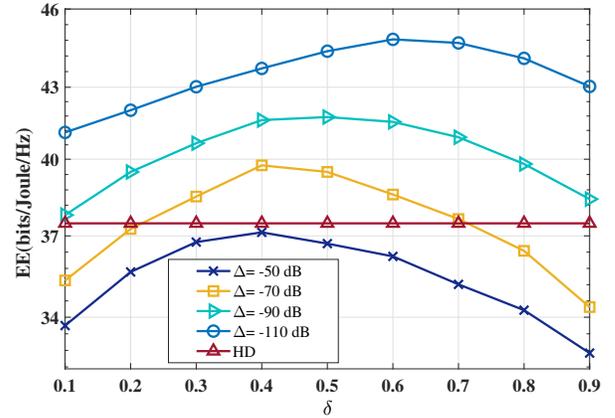}
		\caption{System energy efficiency vs. $\delta$ for different values of SI cancellation factor $\Delta$.}
	\end{figure}
	\begin{figure}
		\centering
		\includegraphics[width=9.00cm,height=6.00cm]{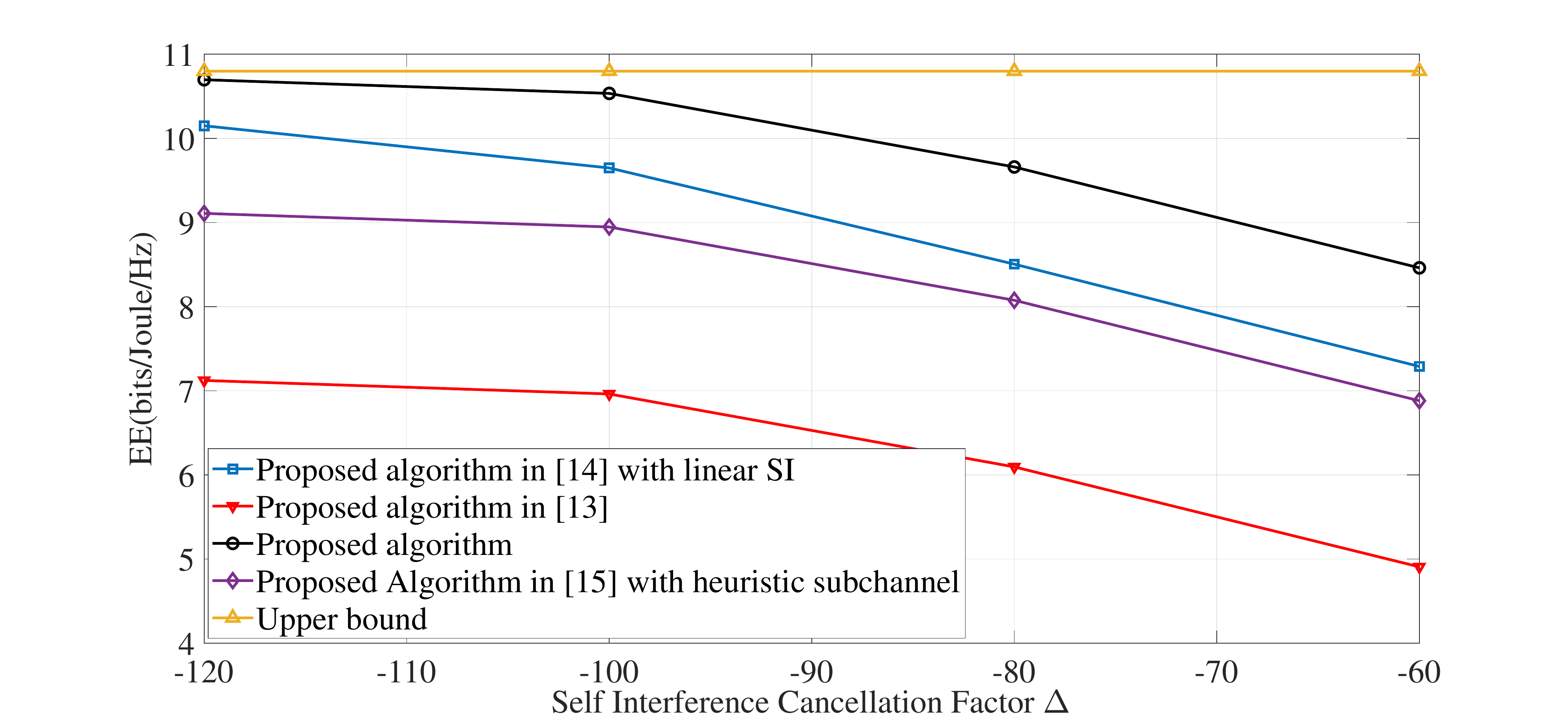}
		\caption{System energy efficiency vs. SI cancellation factor  $\Delta$ compared to the existing benchmark schemes and upper bound.}
	\end{figure}
	
	\subsection{Comparison with the Existing Benchmarks}
	Due to the substantial impact of SI-cancellation factor, $\Delta$, on the EE of IBFD networks, in Fig.~3, we compare the performance of our algorithm (in single-cell scenario) with three other existing schemes and an upper bound, while considering different values for $\Delta$. To derive an upper bound for system EE and draw comparisons between our proposed algorithm and some of the prominent relevant works, in Fig.~3 we consider a single cell network instead of the multi-cell network and ignore SI to obtain the solution of (\ref{last-1}) under this new assumption. 
	We also simulate the proposed algorithm in [15], where a trade-off between EE and SE in a cell with an IBFD BS, is derived. We should also note that even though in [13], SI is presented as an additive factor of the background noise, to make a fair comparison, here we consider that SI is modeled as an linear function of the transmit power (similar to the model used both in [15] and this paper). Since, in our proposed algorithm, we eventually restated our MOOP as an equivalent single objective problem aiming at minimizing system energy consumption, here we also compare the result of our algorithm with that of [14], in which minimization of system aggregate power is investigated.

	As can be seen in Fig. 3, for lower values of $\Delta$, the performance gap between EE of the upper bound and that of our algorithm is quite nominal. However, for higher values of $\Delta$, the increasing SI results in performance degradation in our proposed algorithm, while the upper bound case remains immune to $\Delta$. Thus, the performance gap gets larger as the value of $\Delta$ increases. Furthermore, we can perceive from Fig. 3 that our proposed algorithm can noticeably improve the network EE, as compared to both [12] and [15]. The considerable EE improvement  compared to [12] is due to two factors. Firstly, in [12] a heuristic approach is proposed for subchannel allocation which does not necessarily result in the optimal solution. Secondly, the Augmented Lagrangian method which is used for power allocation, does not work well when interference is taken into account (non-convex optimization problem) and converges to a sub-optimal solution. As for the superiority of our solution over that of [15], we should underline the fact that in [15], subchannel and power allocation problem is decomposed and subchannel allocation is again obtained through a heuristic algorithm. Needless to say, this decomposition can considerably decrease the achieved system EE. 
	
	\begin{figure}
		\centering
		\vspace{-3mm}
		\includegraphics[width=9.00cm,height=6.00cm]{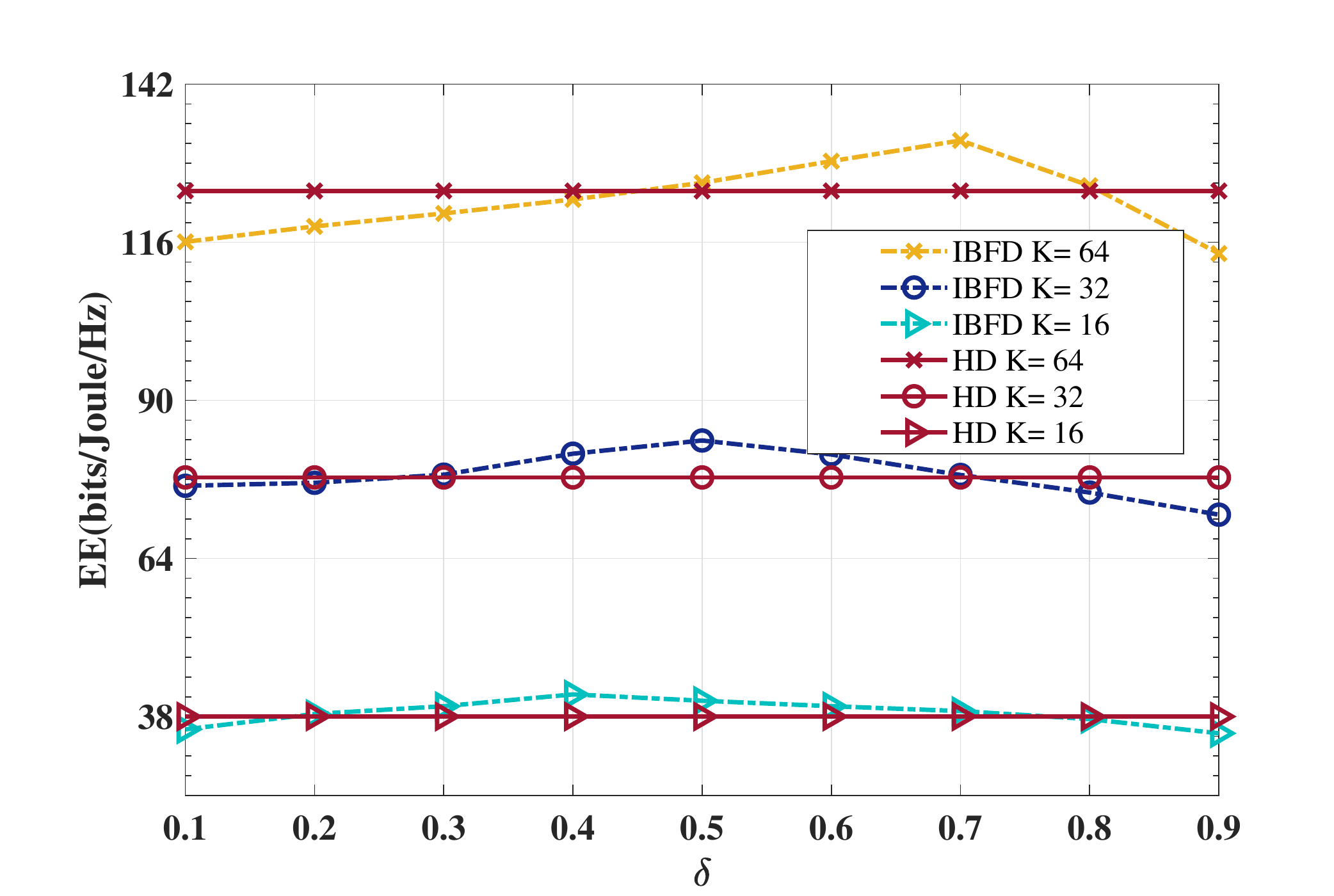}
		\caption{System energy efficiency vs. $\delta$ for different values of the number of subchannels $K$.}
	\end{figure}
	
	\subsection{Impact of the Number of Subchannels}
	In Fig. 4, the impact of number of subchannels on system EE is investigated. 
	It is obvious from this figure that as the number of subchannels increases, system EE improves as well. This is due to the fact that more subchannels in the network means that there are more available subchannels that can be exploited for improving system EE. Moreover, similar to what was previously explained for Fig. 2, around the optimal value of $\delta$ the performance of IBFD exceeds that of HD. However, the values of $\delta$ for which the EE of IBFD surpasses HD differs from one case to another. For instance, compare $K = 64$ and $K = 16$. When $K = 16$, from $\delta = $ 0.2 up to $\delta= $ 0.8, the EE of IBFD becomes higher than HD, while for $K = $ 64, only for $\delta\in$ \{0.5, 0.6, 0.7\}, IBFD EE exceeds that of HD. This observation once again underlines the significance of flexible spectrum usage that is achieved in IBFD communications.  
	
	Consider the HD case when $K = $ 16, where we have  only 8 subchannels for UL and DL, which subsequently restricts users' access to a channel with a desirable condition. On the other hand, in IBFD communications, it is possible for some users to obtain their minimum data rate requirements in both directions, using only a single subchannel. Therefore, some of the subchannels may remain unallocated which can be assigned to those users that have good channel condition in them and can improve the overall network EE.

	\begin{figure}
		\centering
		\includegraphics[width=9.00cm,height=6.00cm]{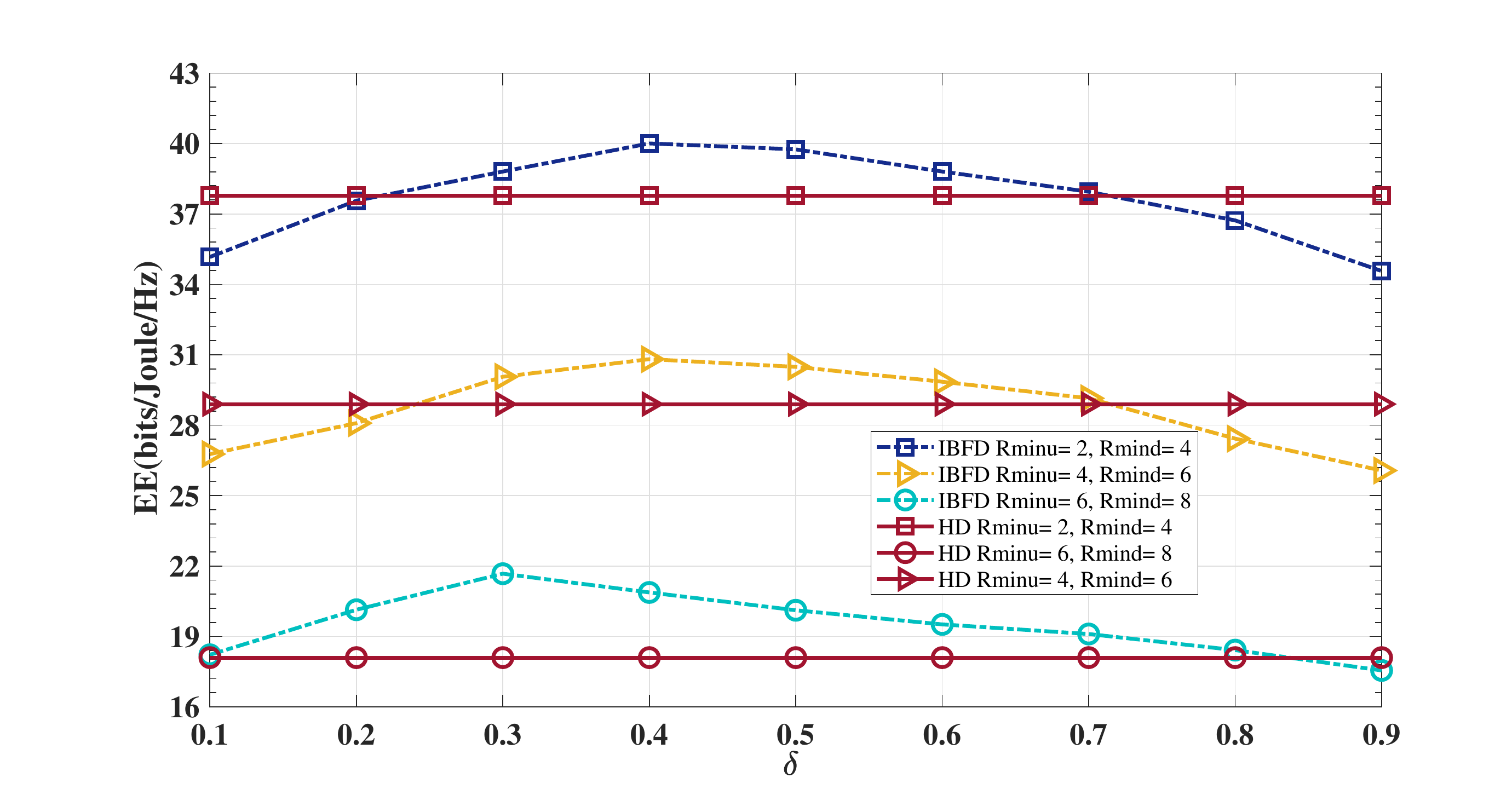}
		\caption{System energy efficiency vs. $\delta$ for different QoS requirement.}
	\end{figure}
	\subsection{Impact of QoS requirements}
	In Fig. 5, network EE for various QoS requirements is depicted for both IBFD and HD communications. In Fig. 5, it is illustrated that as users' QoS requirement increases, EE decreases. The reason is that, when QoS requirement is high, more subchannels have to be allocated to users, especially those with poor overall channel condition, to meet their minimum rate requirements. This, in conjunction with necessity of higher transmit power for reaching a specific data rate, would increase system energy consumption and decrease the achievable data rate of system (lower achievable EE). Furthermore, comparing the first two cases when minimum DL data rate requirements are 4 and 6, respectively, with their corresponding HD case, we can see that the performance gain from IBFD is much higher in the first case. In fact as mentioned before, since in IBFD communications we have to deal with additional interference (such as user to user and BS to BS interference), when level of interference gets higher due to the increased QoS requirements, the performance degradation in IBFD communication would be more intense. However this explanation is not applicable when we compare the trend of EE  when $R_\mathrm{min}^\mathrm{dl}$ = 4 with EE of the case where $R_\mathrm{min}^\mathrm{dl}$= 8 (in which the EE of IBFD communications surpasses the EE of HD for almost all values of $\delta$). In fact, when $R_\mathrm{min}^\mathrm{dl}$= 8, the minimum data rate requirement is high enough to result in scarcity of available resources. Thus, as can be seen, in this case, the EE of IBFD is almost always higher than that of HD.
	\begin{figure}
		\centering
		\includegraphics[width=9.00cm,height=6.00cm]{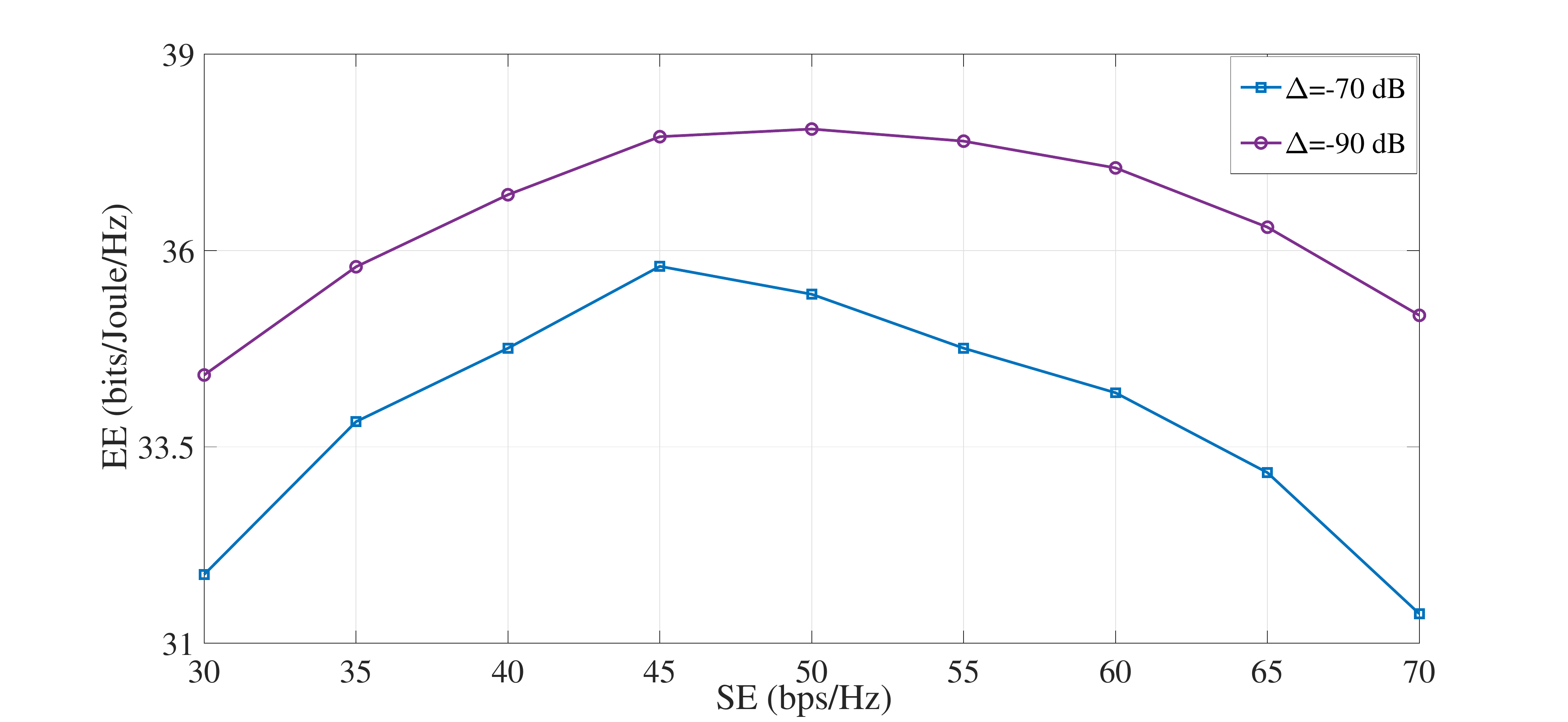}
		\caption{System energy efficiency vs. spectral efficiency for SI cancellation factors.}
	\end{figure}
		\subsection{Performance Tradeoff Between EE-SE}
	Fig. 6 illustrates the trend of system EE with respect to SE. It can be perceived that as the throughput of network increases, EE starts to increase first and then sharply decreases. In fact, system throughput is by itself a function of system transmit power, thus any increase in the throughput may results in higher energy consumption as well. The point at which energy  consumption exceeds SE gains, the overall EE starts reducing.~We also observe that the value of trade-off region decreases, as the SI cancellation factors decrease.~The reason is straightforward and it is because that the low level of SI cancellation factors spend more power which results in a degradation in the system SE as well as system EE. 
	\begin{figure}
		\centering
		\vspace{-3mm}
		\includegraphics[width=9.00cm,height=6.00cm]{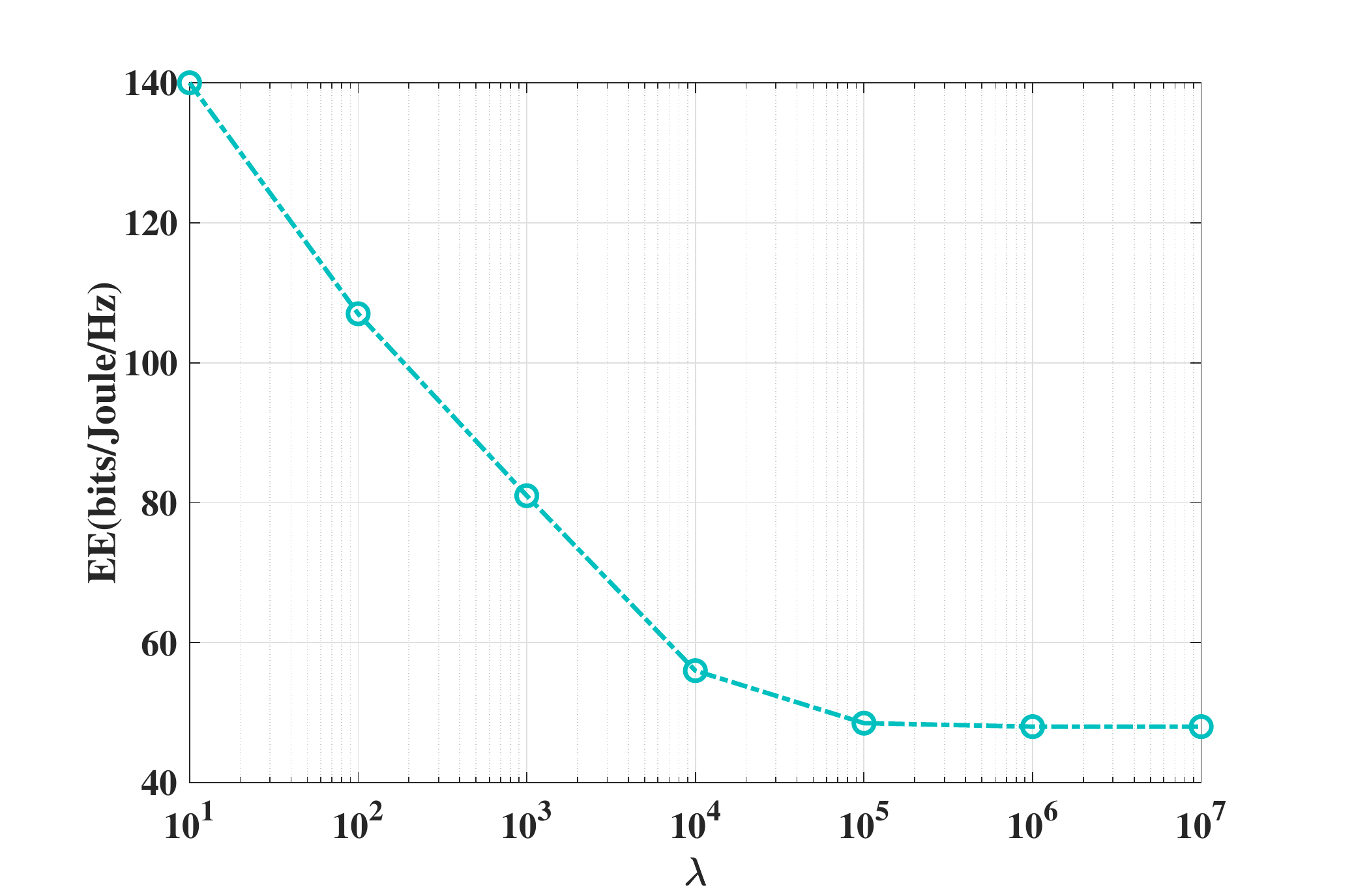}
		\caption{System energy efficiency vs. $\lambda$}
		\vspace{-5mm}
	\end{figure}
	
	\subsection{Impact of the Weight of Penalty Function}
	Another important parameter in our simulations is the parameter $\lambda$. This parameter works as the weight of the penalty function in the objective function of (\ref{last}), in order to ensure that the values of subchannel allocation variables would belong to the set \{0, 1\}. In Fig. 7, the impact of this parameter on system EE is illustrated. At first, since $\lambda$ is small and thus the cost of having non-binary subchannel allocation variables is small, the value of these variables would not comply with their binary nature. This means that at least for some of the subchannels, the equality $a_{n,b,k}- a^2_{n,b,k} = 0 $ would not hold. In this case, it is possible for a portion of subchannel $k$ to be assigned to one user while the other portion is assigned to others. This would increase the achievable data rate of users and decreases system energy consumption, however, the OFDMA nature of network would be violated. As $\lambda$ increases, each subchannel would be allocated to at most one user and the value of subchannel variables would get closer to 0 or 1. This would subsequently cause the reduction of system EE, as the number of subchannels that a transceiver can use for sending its data on, would be restricted. Nevertheless, after $\lambda$ reaches a specific value, here $10^6$, EE converges to its final value and remains unchanged regardless of any further increase in $\lambda$. This is due to the fact that when $\lambda$ is sufficiently large, the penalty term would converge to zero and from there on, increasing its weight would become negligible. 
	
	\subsection{Convergence of Algorithm~1}
	The convergence of our proposed algorithm is investigated in three cases in Fig.~8. In the first case, we assume that users' transmit power is initialized by their maximum transmit power divided over number of subchannels, whereas in the second and third case the initial transmit powers are set to the maximum transmit power and zero, respectively. Even though the rate of convergence differs from one case to another, in all the three scenarios only after a limited number of iterations our algorithm converges to a locally optimal value.

	\begin{figure}
		\centering
		\vspace{-3mm}
		\includegraphics[width=10.00cm,height=6.00cm]{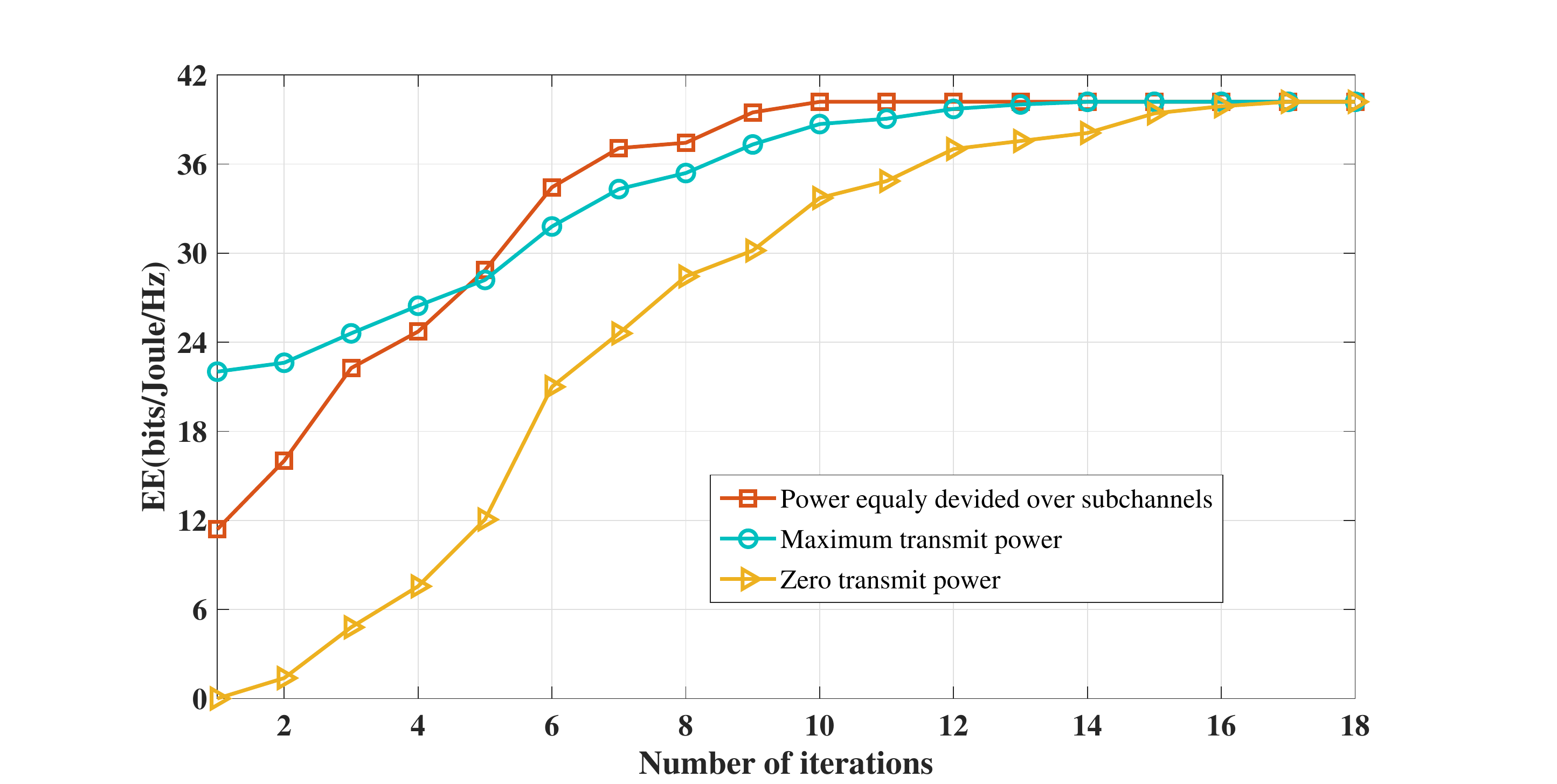}
		\caption{The convergence rate of Algorithm 1}
	\end{figure}
	\section{Conclusion}
	In this paper, we investigated the problem of joint BS,~subchanel assignment and power control for maximization of EE in an OFDMA small-cell network with IBFD communications. This problem was then reformulated into a MOOP which enables us to prioritize system throughput maximization and power consumption minimization, depending on our applications' prerequisites. To obtain all the Pareto fronts in the aforementioned problem, we used $\epsilon$-constraint method. Furthermore, in order to tackle the non-convexity of the constraint set, MM approach was employed for approximating the non-convex rate functions and a penalty function was also introduced to handle the binary joint BS,~subchannel assignment variables. In the simulation results, the effectiveness of IBFD communications, as well as the capability of our proposed solution in improving EE of network were demonstrated through extensive simulations.

\begin{appendices}
\section{Proof of Proposition 1}
\textcolor{black}{In order to give a formal proof of Proposition 1, a
general fractional programming which can be formulated as below is considered:}
\begin{equation}
\textcolor{black}{
\min_{x} \text{ \ }\theta (x)=\frac{f(x)}{g(x)}:x\in X  \label{th1}}
\end{equation}
\textcolor{black}{where $X$ is a nonempty compact set belonging to $R^{n}$. Furthermore, $f(x)$ and $g(x)$ are continuous
real-valued functions of $x\in X$ and $g(x)>0$, for all $x\in X.$ To address the optimal solution, we define the following function:
\begin{equation}
\textcolor{black}{H(\psi^{\ast })=\min_{x}\left\{ f(x)-\psi^{\ast }g(x):x\in X\right\}}   \label{th4}
\end{equation}%
as the minimum value of $f(x)-\psi g(x)$ with each fixed $\psi^{\ast }.$\\
Based on Dinklebach method \cite{R}, it is proved that
\begin{equation}
\psi^{\ast }=\frac{f(x^{\ast })}{g(x^{\ast })}=\min_{x}\left\{ \frac{f(x)}{g(x)}%
:x\in X\right\}   \label{th2}
\end{equation}%
if and only if
\begin{equation}
H(\psi^{\ast })=H(\psi^{\ast },x^{\ast })=\min_{x}\left\{ f(x)-\psi^{\ast }g(x):x\in
X\right\} =0.  \label{th3}
\end{equation}
Therefore, from (\ref{th2}) and (\ref{th3}), it can be concluded that the
optimal solution $x^*$ of (\ref{th1}) is the optimal solution of (\ref{th4})\
when $\psi=\psi^{\ast }$, where $\psi^{\ast }$ denotes the minimum value of (\ref{th1}%
).\\
On the other hand, we now formulate a general MOOP with two competing objectives as
follows:
\begin{eqnarray}
& \min \text{ \ \ }f(x)  \label{mup1} \\
& \max \text{ \ \ }g(x)  \label{mup2} \\
& \mathrm{s.t.}\text{ \ \ \ }x>0  \label{mup3}\nonumber
\end{eqnarray}
It should be noted that $f(x)$ is the numerator of fractional optimization problem in \ref{th1} while $g(x)$ is its denominator. By combining the competing objective functions (\ref{mup1}) and (\ref%
{mup2}) into a single objective function  through $\epsilon$-method, the objective functions in the MOOP can be changed into a
single optimization as:
\begin{align}\label{mup4}
&~\min f(x)\\ 
&~g(x)\geq \epsilon\\\nonumber
&\mathrm{s.t.}~~x>0 \nonumber
\end{align}%
By comparing \eqref{mup4} and \eqref{th2}, one can easily verify that optimal set of %
\eqref{mup4} is inclusive of the solution for \eqref{th2}. The value of $\epsilon
$ that makes the minimum of the introduced single objective optimization, would clearly yield a solution
for the fractional programming problem as well.
}
\section{Proof of Proposition 2}
	We start this proof by first demonstrating that the solution obtained for optimization problem (\ref{last}), through our proposed algorithm, is a tight upper-bound for the problem (\ref{last+1}).
	
	First we should point out that as demonstrated in \cite{lenog}, the rate approximations given in (31) and (32), are tight lower bounds of the original rate functions. In conjunction with this fact, we also have to consider that in the $t^\mathrm{th}$ iteration, the objective function of (\ref{last}) would be $e_1(\mathbf{a}^t,~\mathbf{p}^t,~\mathbf{q}^t) - \tilde{e}_2(\mathbf{a}^t)$. Therefore, we have:
	\begin{eqnarray}
		\begin{aligned}
			e_1(&\mathbf{a}^{t+1},~\mathbf{p}^{t+1},~\mathbf{q}^{t+1})- {e}_2(\mathbf{a}^{t+1})\leq e_1(\mathbf{a}^{t+1},~\mathbf{p}^{t+1},~\mathbf{q}^{t+1})\\&-{e}_2(\mathbf{a}^t) -\nabla_{\mathbf{a}}e_{2}^{T}(\mathbf{a}^{t}).(\mathbf{a}^{t+1}-{\mathbf{a}}^{t})
			=\min_{\mathbf{a},\mathbf{p},\mathbf{q}}e_1(\mathbf{a},~\mathbf{p},~\mathbf{q})\\&-{e}_2(\mathbf{a}^t) -\nabla_{\mathbf{a}}e_{2}^{T}(\mathbf{a}^{t}).(\mathbf{a}-{\mathbf{a}}^{t})\leq e_1(\mathbf{a}^{t},~\mathbf{p}^{t},~\mathbf{q}^{t})\\&-{e}_2(\mathbf{a}^t)-\nabla_{\mathbf{a}}e_{2}^{T}(\mathbf{a}^{t}).(\mathbf{a}^{t}-{\mathbf{a}}^{t})\\
			&=e_1(\mathbf{a}^{t},~\mathbf{p}^{t},~\mathbf{q}^{t})-{e}_2(\mathbf{a}^t).
		\end{aligned}	
	\end{eqnarray}
	
	One can verify that as the DC iterations continue, the objective function of (\ref{last}) takes smaller values (until convergence).~Eventually, when $\mathbf{a}^t=\mathbf{a}^{t-1}$, $\mathbf{p}^t=\mathbf{p}^{t-1}$, and $\mathbf{q}^t=\mathbf{q}^{t-1}$,~the non-equality above would hold for equality. From the previous two facts we can conclude that the obtained solution for (\ref{last}) would be a tight upper bound for (\ref{last+1}) as well.
	
	Furthermore, from \textbf{Remark 2} we know that the optimization problem (\ref{last+1}) is equivalent to (\ref{third}). Thus, the solution obtained for (\ref{last+1}) would also be a locally optimal point for (\ref{third}). 
		Finally, from Remark~1, we conclude that for the right value of $\epsilon$, the solution attained for (\ref{third}) would also be a locally optimal point for (\ref{second}), which is by itself equivalent to the problem (\ref{first}).

	\end{appendices}
	\bibliographystyle{IEEEtran}
	\bibliography{IEEEabrv,ref}

\begin{thebibliography}{10}
\providecommand{\url}[1]{#1}
\csname url@samestyle\endcsname
\providecommand{\newblock}{\relax}
\providecommand{\bibinfo}[2]{#2}
\providecommand{\BIBentrySTDinterwordspacing}{\spaceskip=0pt\relax}
\providecommand{\BIBentryALTinterwordstretchfactor}{4}
\providecommand{\BIBentryALTinterwordspacing}{\spaceskip=\fontdimen2\font plus
\BIBentryALTinterwordstretchfactor\fontdimen3\font minus
  \fontdimen4\font\relax}
\providecommand{\BIBforeignlanguage}[2]{{%
\expandafter\ifx\csname l@#1\endcsname\relax
\typeout{** WARNING: IEEEtran.bst: No hyphenation pattern has been}%
\typeout{** loaded for the language `#1'. Using the pattern for}%
\typeout{** the default language instead.}%
\else
\language=\csname l@#1\endcsname
\fi
#2}}
\providecommand{\BIBdecl}{\relax}
\BIBdecl

\bibitem{GL}
A.~{Khalili}, S.~{Zarandi}, M.~{Rasti}, and E.~{Hossain}, ``Multi-objective
  optimization for energy- and spectral-efficiency tradeoff in in-band
  full-duplex (ibfd) communication,'' in \emph{2019 IEEE Global Communications
  Conference (GLOBECOM)}, Dec 2019, pp. 1--6.

\bibitem{111}
A.~Sabharwal, P.~Schniter, D.~Guo, D.~W. Bliss, S.~Rangarajan, and R.~Wichman,
  ``In-band full-duplex wireless: challenges and opportunities,'' \emph{IEEE J.
  Select. Areas Commun.}, vol.~32, no.~9, pp. 1637--1652, Sep. 2014.

\bibitem{DR.Hina}
K.~M. {Thilina}, H.~{Tabassum}, E.~{Hossain}, and D.~I. {Kim}, ``Medium access
  control design for full duplex wireless systems: challenges and approaches,''
  \emph{IEEE Communications Magazine}, vol.~53, no.~5, pp. 112--120, May 2015.

\bibitem{Overview}
Q.~{Wu}, G.~Y. {Li}, W.~{Chen}, D.~W.~K. {Ng}, and R.~{Schober}, ``An overview
  of sustainable green 5g networks,'' \emph{IEEE Wireless Communications},
  vol.~24, no.~4, pp. 72--80, Aug 2017.

\bibitem{8}
C.~Nam, C.~Joo, S.~Yoon, and S.~Bahk, ``Resource allocation in full-duplex
  {OFDMA} networks: Approaches for full and limited {CSI}s,'' \emph{Journal of
  Communications and Networks}, vol.~18, no.~6, pp. 913--925, Dec. 2016.

\bibitem{9}
C.~Nam, C.~Joo, and S.~Bahk, ``Joint subcarrier assignment and power allocation
  in full-duplex {OFDMA} networks,'' \emph{IEEE Trans. Wireless Commun.,},
  vol.~14, no.~6, pp. 3108--3119, June. 2015.

\bibitem{10}
P.~Tehrani, F.~Lahouti, and M.~Zorzi, ``Resource allocation in {OFDMA} networks
  with half-duplex and imperfect full-duplex users,'' \emph{Proc. IEEE ICC
  Conf.}, pp. 1--6, May. 2016.

\bibitem{11}
J.~Yun, ``Intra and inter-cell resource management in full-duplex heterogeneous
  cellular networks,'' \emph{IEEE Trans. Mobile Comput.}, vol.~15, no.~2, pp.
  392--405, Feb. 2016.

\bibitem{13}
S.~Zarandi and M.~Rasti, ``Resource allocation in inband full-duplex two-tier
  networks with quality of service provisioning,'' \emph{Proc. IEEE WCNC
  Conf.}, pp. 1--6, Apr. 2018.

\bibitem{NOMA_FD}
Y.~{Sun}, D.~W.~K. {Ng}, Z.~{Ding}, and R.~{Schober}, ``Optimal joint power and
  subcarrier allocation for full-duplex multicarrier non-orthogonal multiple
  access systems,'' \emph{IEEE Trans. on Commun.}, vol.~65, no.~3, pp.
  1077--1091, March 2017.

\bibitem{2}
S.~{Sekander}, H.~{Tabassum}, and E.~{Hossain}, ``Decoupled uplink-downlink
  user association in multi-tier full-duplex cellular networks: A two-sided
  matching game,'' \emph{IEEE Trans. on Mobile Computing}, vol.~16, no.~10, pp.
  2778--2791, Oct 2017.

\bibitem{6}
R.~Aslani and M.~Rasti, ``Distributed power control schemes for in-band
  full-duplex energy harvesting wireless networks,'' \emph{IEEE Trans. Wireless
  Commun.}, vol.~16, no.~8, pp. 5233--5243, Aug. 2017.

\bibitem{moghayese}
Y.~{Dong}, H.~{Zhang}, M.~J. {Hossain}, J.~{Cheng}, and V.~C.~M. {Leung},
  ``Energy efficient resource allocation for ofdma full duplex distributed
  antenna systems with energy recycling,'' in \emph{2015 IEEE Global Commun.
  Conference (GLOBECOM)}, Dec 2015, pp. 1--6.

\bibitem{MO2}
S.~{Leng}, D.~W.~K. {Ng}, N.~{Zlatanov}, and R.~{Schober}, ``Multi-objective
  resource allocation in full-duplex swipt systems,'' in \emph{2016 IEEE
  International Conference on Communications (ICC)}, May 2016, pp. 1--7.

\bibitem{MO3}
D.~{Wen}, G.~{Yu}, R.~{Li}, Y.~{Chen}, and G.~Y. {Li}, ``Results on energy- and
  spectral-efficiency tradeoff in cellular networks with full-duplex enabled
  base stations,'' \emph{IEEE Trans. on Wireless Commun.}, vol.~16, no.~3, pp.
  1494--1507, March 2017.

\bibitem{exlusive}
J.~{Bai} and A.~{Sabharwal}, ``Distributed full-duplex via wireless
  side-channels: Bounds and protocols,'' \emph{IEEE Transactions on Wireless
  Communications}, vol.~12, no.~8, pp. 4162--4173, August 2013.

\bibitem{Ata_TWC}
A.~{Khalili}, M.~{Robat Mili}, M.~{Rasti}, S.~{Parsaeefard}, and D.~W.~K. {Ng},
  ``Antenna selection strategy for energy efficiency maximization in uplink
  ofdma networks: A multi-objective approach,'' \emph{IEEE Transactions on
  Wireless Communications}, vol.~19, no.~1, pp. 595--609, Jan 2020.

\bibitem{Mili}
M.~R. {Mili} and L.~{Musavian}, ``Interference efficiency: A new metric to
  analyze the performance of cognitive radio networks,'' \emph{IEEE
  Transactions on Wireless Communications}, vol.~16, no.~4, pp. 2123--2138,
  April 2017.

\bibitem{article}
K.~Chircop and D.~Zammit-Mangion, ``On epsilon-constraint based methods for the
  generation of {Pareto} frontiers,'' \emph{Journal of Mechanics Engineering
  and Automation}, vol.~3, pp. 279--289, May. 2013.

\bibitem{CL}
A.~{Khalili}, S.~{Zarandi}, and M.~{Rasti}, ``Joint resource allocation and
  offloading decision in mobile edge computing,'' \emph{IEEE Communications
  Letters}, vol.~23, no.~4, pp. 684--687, April 2019.

\bibitem{Ata_WCL}
A.~{Khalili}, S.~{Akhlaghi}, H.~{Tabassum}, and D.~W.~K. {Ng}, ``Joint user
  association and resource allocation in the uplink of heterogeneous
  networks,'' \emph{IEEE Wireless Communications Letters}, vol.~9, no.~6, pp.
  804--808, 2020.

\bibitem{subchannel}
E.~Che, H.~D. Tuan, and H.~H. Nguyen, ``Joint optimization of cooperative
  beamforming and relay assignment in multi-user wireless relay networks,''
  \emph{IEEE Trans. Wireless Commun.}, vol.~13, no.~10, pp. 5481--5495, Oct.
  2014.

\bibitem{MM}
Y.~Sun, P.~Babu, and D.~P. Palomar, ``Majorization-minimization algorithms in
  signal processing, communications, and machine learning,'' \emph{IEEE Trans.
  Signal Process.}, vol.~65, no.~3, pp. 794--816, Feb. 2017.

\bibitem{DC}
H.~H. Kha, H.~D. Tuan, and H.~H. Nguyen, ``Fast global optimal power allocation
  in wireless networks by local {D.C.} programming,'' \emph{IEEE Trans.
  Wireless Commun.}, vol.~11, no.~2, pp. 510--515, Feb. 2012.

\bibitem{WCL}
M.~R. {Mili}, A.~{Khalili}, D.~W.~K. {Ng}, and H.~{Steendam}, ``A novel
  performance tradeoff in heterogeneous networks: A multi-objective approach,''
  \emph{IEEE Wireless Communications Letters}, vol.~8, no.~5, pp. 1402--1405,
  Oct 2019.

\bibitem{R}
W.~Dinkelbach, ``On nonlinear fractional programming,'' \emph{Management
  science}, vol.~13, no.~7, pp. 492--498, 1967.

\bibitem{lenog}
D.~T. Ngo, S.~Khakurel, and T.~Le-Ngoc, ``Joint subchannel assignment and power
  allocation for {OFDMA} femtocell networks,'' \emph{IEEE Trans. Wireless
  Commun.}, vol.~13, no.~1, pp. 342--355, Jan. 2014.

\end{thebibliography}
		\begin{IEEEbiography}[{\includegraphics[width=1.1in,height=1.45in]
	{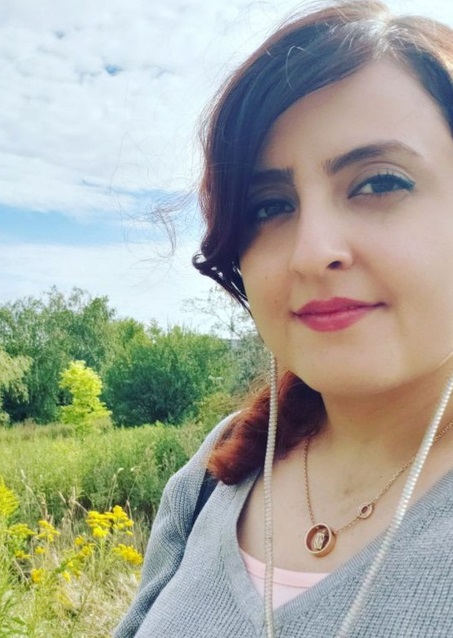}}]
	{Sheyda Zarandi} received her M.Sc. degree
in Information Technology with focus on Computer Networks from the Amirkabir University
of Technology, Tehran, Iran, in
2014 and 2017.
	From 2019 she has continued her study and research at York university, Toronto, Canada. Her research
	interests are mainly focused on resource allocation in wireless communication, Edge learning methods,
	green communication, mobile edge computing.
\end{IEEEbiography}
	\begin{IEEEbiography}[{\includegraphics[width=1.1in,height=1.35in]
	{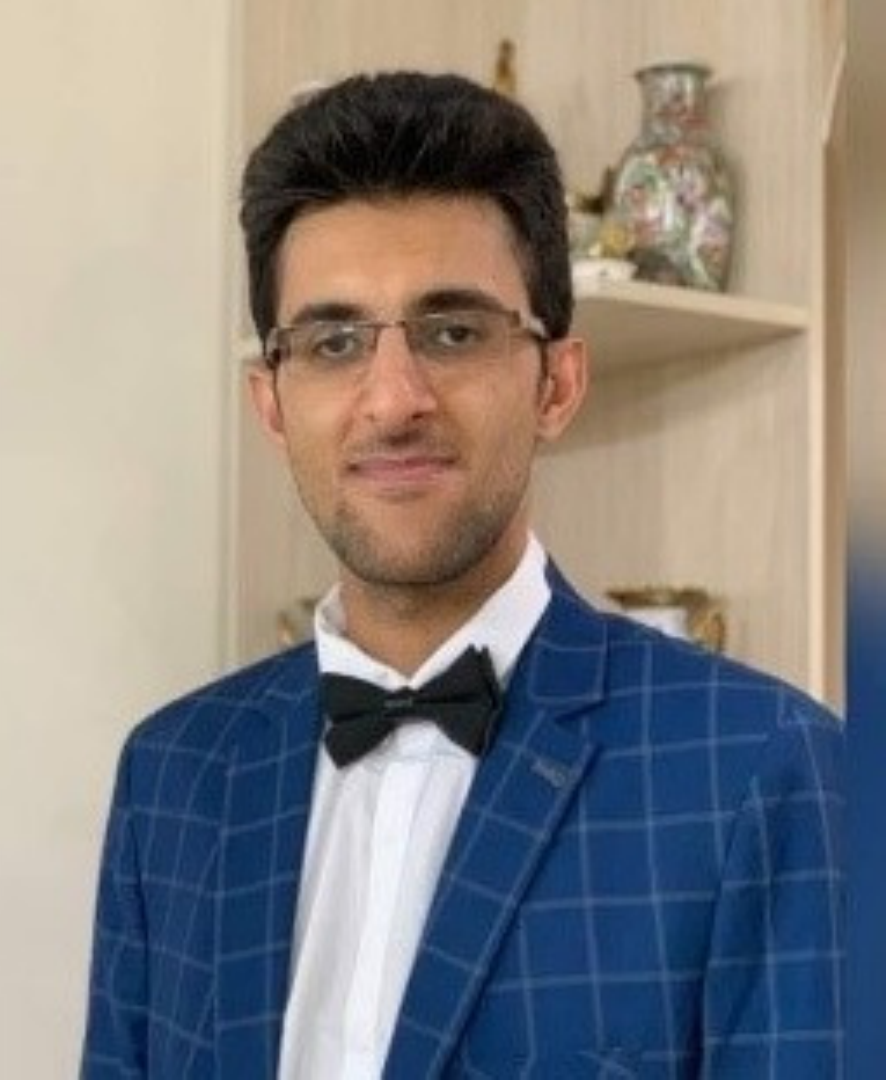}}]
	{Ata Khalili} (S'18) received the B.Sc. degree and M.Sc. degree with first class honors in Electronic
	Engineering and Telecommunication Engineering from Shahed University in 2016 and 2018, respectively.
	From 2018 until 2019, he was a visiting researcher at the Department of Computer Engineering and Information Technology, Amirkabir University of Technology, Tehran, Iran. Since Oct 2019, he has been at the Department of Electrical and Computer Engineering, Tarbiat Modares University, Tehran, Iran, where he is currently a research assistant. He is
	also working as a research assistant at the Electronics
	Research Institute, Sharif University of Technology, Tehran, Iran. His research
	interests include intelligent reflecting surface (IRS), unmanned aerial vehicle (UAV) communications, resource allocation in wireless communication,
	green communication, mobile edge computing, and optimization theory.~He served as a member of Technical Program Committees for the IEEE Globecom Conference in 2020.
\end{IEEEbiography}

\begin{IEEEbiography}[{\includegraphics[width=1.1in,height=1.in,clip,keepaspectratio]{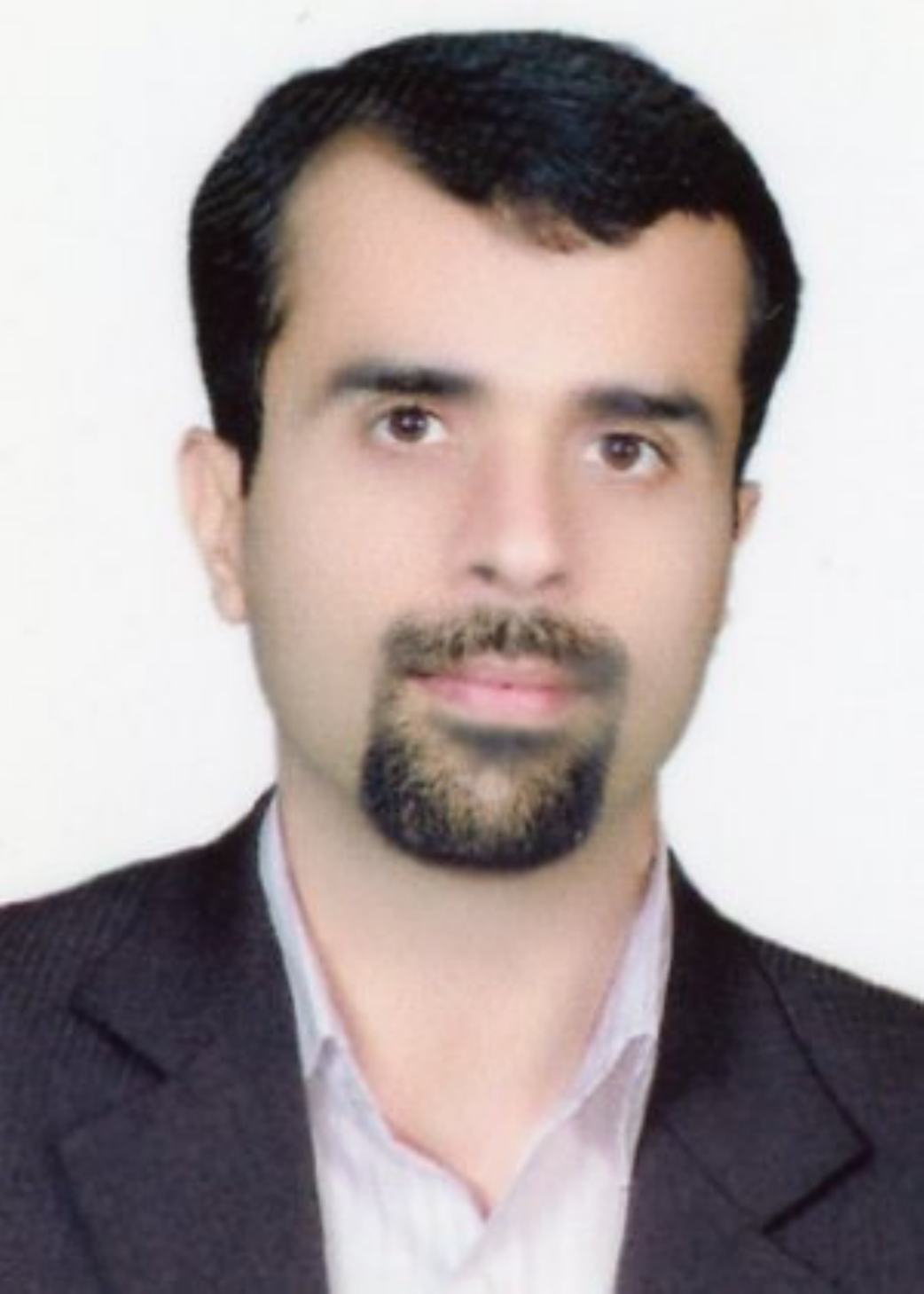}}]
{Mehdi Rasti} (S'08-M'11) is currently an Assistant Professor at the Department of Computer Engineering, Amirkabir University of Technology, Tehran, Iran. From November 2007 to November 2008, he was a visiting researcher at the Wireless@KTH, Royal Institute of Technology, Stockholm, Sweden. From September 2010 to July 2012 he was with Shiraz University of Technology, Shiraz, Iran. From June 2013 to August 2013, and from July 2014 to August 2014 he was a visiting researcher in the Department of Electrical and Computer Engineering, University of Manitoba, Winnipeg, MB, Canada. He received his B.Sc. degree from Shiraz University, Shiraz, Iran, and the M.Sc. and Ph.D. degrees both from Tarbiat Modares University, Tehran, Iran, all in Electrical Engineering in 2001, 2003 and 2009, respectively. His current research interests include radio resource allocation in IoT, Beyond 5G and 6G wireless networks.
\end{IEEEbiography}

\begin{IEEEbiography}[{\includegraphics[width=1.1in,height=1.35in]{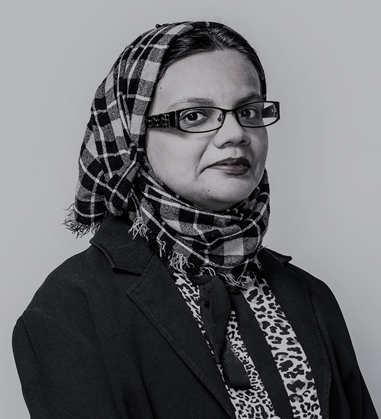}}]
{Hina Tabassum} (SM'17) Hina Tabassum is currently an Assistant Professor at the Lassonde School of Engineering, York University, Canada. Prior to that, she was a postdoctoral research associate at the Department of Electrical and Computer Engineering, University of Manitoba, Canada. She received her PhD degree from King Abdullah University of Science and Technology (KAUST). She is a Senior member of IEEE and registered Professional Engineer in the province of Ontario, Canada. She has been recognized as an Exemplary Reviewer (Top $2\%$ of all reviewers) by IEEE Transactions on Communications in 2015, 2016, 2017, and 2019. Currently, she is serving as an Associate Editor in IEEE Communications Letters and IEEE Open Journal of Communications Society. Her research interests include stochastic modeling and optimization of wireless networks including vehicular, aerial, and  satellite networks, millimeter and terahertz communication networks, software-defined networking and virtualized resource allocation in wireless networks.
\end{IEEEbiography}

\end{document}